\newcommand{\ignore}[1]{}
\newcommand{\Start}{\text{S{\scriptsize tart}}}
\newcommand{\End}{\text{E{\scriptsize nd}}}
\newcommand{\gP}{\mathcal{P}}
\newcommand{\gB}{\mathcal{B}}
\newcommand{\gS}{\mathcal{S}}
\newif\ifinmain
\theoremstyle{plain}
\newcommand{\I}{\mathcal{I}}
\newcommand{\TSP}{\mathrm{TSP}}
\newcommand{\tsp}{\mathit{tsp}}
\newcommand{\MST}{\mathrm{MST}}
\newcommand{\short}{\mathrm{short}}
\newcommand{\Reals}{{\mathbb{R}}}
\newcommand{\etal}{\emph{et al.}\xspace}
\let\eps\varepsilon
\let\leq\leqslant
\let\le\leqslant
\let\geq\geqslant
\let\ge\geqslant
\newcommand{\mypara}[1]{\vspace{.4\baselineskip} \noindent \textbf{\sffamily #1}}
\newcommand{\peyman}[1]{{\footnotesize{\color{red}Peyman: #1}}}
\title{On Cyclic Solutions to the Min-Max Latency Multi-Robot Patrolling Problem}
  \author{Peyman Afshani}{Department of Computer Science, Aarhus University, Denmark}{peyman@cs.au.dk}{}{}
  \author{Mark de Berg}{Department of Mathematics and Computer Science, TU Eindhoven, the Netherlands}{M.T.d.Berg@tue.nl}{https://orcid.org/0000-0001-5770-3784}{Supported by the  Dutch Research Council (NWO) through    Gravitation-grant NETWORKS-024.002.003.}
  \author{Kevin Buchin}{Department of Computer Science, TU Dortmund, Germany}{kevin.buchin@tu-dortmund.de}{https://orcid.org/0000-0002-3022-7877}{}
  \author{Jie Gao}{Department of Computer Science, Rutgers University; New Brunswick, NJ 08854, USA}{jg1555@rutgers.edu}{https://orcid.org/0000-0001-5083-6082}{This work is supported by NSF OAC-1939459, CCF-2118953 and CCF-1934924.}
  \author{Maarten L\"{o}ffler}{Department of Information and Computing Sciences, Utrecht University, the Netherlands}{m.loffler@uu.nl}{}{}
  \author{Amir Nayyeri}{School of Electrical Engineering and Computer Science, Oregon State University, OR 97330, USA}{nayyeria@eecs.oregonstate.edu}{}{}
  \author{Benjamin Raichel}{Department of Computer Science; University of Texas at Dallas; Richardson, TX 75080, USA}{benjamin.raichel@utdallas.edu}{}{Partially supported by NSF CAREER Award 1750780.}
  \author{Rik Sarkar}{School of Informatics, University of Edinburgh, Edinburgh, U.K.}{rsarkar@inf.ed.ac.uk}{}{}
  \author{Haotian Wang}{Department of Computer Science, Rutgers University; New Brunswick, NJ 08854, USA}{hw487@cs.rutgers.edu}{}{}
  \author{Hao-Tsung Yang}{School of Informatics, University of Edinburgh, Edinburgh, U.K.}{haotsungyang@gmail.com}{}{}
\authorrunning{Peyman Afshani \etal}
\keywords{Approximation, Motion Planning, Scheduling}
\begin{document}

%\date{}
\maketitle

\begin{abstract}
    We consider the following surveillance problem: 
    Given a set $P$ of $n$ sites in a metric space and a set~$R$ of $k$ robots with the same maximum speed,
    compute a \emph{patrol schedule} of minimum latency for the robots.
    Here a patrol schedule specifies for each robot an infinite sequence of sites to visit (in the given order) and the latency $L$ of a schedule is the maximum latency of any site, where the latency of a site $s$ is the
    supremum of the lengths of the time intervals between consecutive visits to $s$.
    
    When $k=1$ the problem is equivalent to the travelling salesman problem (TSP) 
    and thus it is NP-hard. For $k\ge 2$ (which is the version we are interested in) the problem becomes 
    even more challenging; for example, it is not even clear if the decision version of the problem is decidable, in particular in the Euclidean case.

    We have two main results. 
    We consider \emph{cyclic solutions} in which the set of sites must be partitioned into
    $\ell$ groups, for some~$\ell \leq k$, and each group is assigned a subset of the robots
    that move along the travelling salesman tour of the group at equal distance
    from each other.
    Our first main result is that approximating the optimal latency of the class of cyclic
    solutions can be reduced to approximating the optimal travelling salesman tour
    on some input, with only a $1+\varepsilon$ factor loss in the approximation
    factor and an $O\left(\left(  k/\varepsilon  \right)^k\right)$ factor loss in
    the runtime, for any $\varepsilon >0$.
    Our second main result shows that an optimal cyclic solution is a
    $2(1-1/k)$-approximation of the overall optimal solution. Note that for $k=2$ this
    implies that an optimal cyclic solution is optimal overall. We conjecture that
    this is true for $k\geq 3$ as well.

    The results have a number of consequences. 
    For the Euclidean version of the problem, for instance, combining our results with known results on Euclidean TSP, yields a
    PTAS for approximating an optimal cyclic solution, 
    and it yields a $(2(1-1/k)+\eps)$-approximation  
    of the optimal unrestricted (not necessarily cyclic) solution. 
    If the conjecture mentioned above is true, then our algorithm is actually
    a PTAS for the general problem in the Euclidean setting. 
    Similar results can be obtained by combining our results with other known TSP algorithms in 
    non-Euclidean metrics. 
\end{abstract}

%\newpage

%---------------------------------------------------------------------------------------
\section{Introduction}
%---------------------------------------------------------------------------------------
We study the following problem, motivated by the problem of monitoring a fixed set of locations
using autonomous robots:
We are given a set $P= \{ s_1, \cdots, s_n\}$ of $n$ sites in a metric space as well as a set $R =
\{r_1, \cdots, r_k\}$ of $k$ robots.
We assume the robots have the same maximum speed, called the \emph{unit speed}, and 
their task is to repeatedly visit (i.e., survey) the sites such that the maximum time during 
which any site is left unmonitored is minimized.
More precisely, we wish to compute a \emph{patrol schedule};
that is, an infinite sequence of sites to visit for each robot,
of minimum \emph{latency}. Here the latency of a site~$s_i$
is the supremum of the length of the time intervals between consecutive visits of~$s_i$,
and the latency of the patrol schedule is the maximum latency over all the sites.
\medskip

\mypara{Related Work.} 
For $k=1$, the problem reduces to the Traveling Salesman Problem. To see this,
consider the time interval $[0,3L]$, where $L$ is the optimal latency, and 
observe that every site is visited at least twice by the robot in this time interval. 
Let $L'\leq L$ be the maximum length of time between two consecutive visits of a site.
Then there exists a site that is visited at times $t_0$ and $t_0 + L'$ and 
all other sites are visited at least once in the time interval $(t_0, t_0 + L')$.
Hence, if an optimal solution has latency~$L$, there is a TSP tour of length at most~$L$.
The converse is clearly true as well---by repeatedly traversing a TSP tour of
length~$L$ we obtain a patrol schedule of latency~$L$---and so the TSP problem
is equivalent to the patrol problem for a single robot.
Since TSP is NP-hard even in the Euclidean case~\cite{papadimitriou1977euclidean} 
we will focus on approximation algorithms.
% (It is easy to see that NP-hardness for $k=1$ implies NP-hardness for any fixed~$k$.)
%Although TSP cannot be approximated in general~\cite{sahni1976p}, 
There are efficient approximation algorithms for TSP and, hence, for the
patrolling problem for~$k=1$. In particular, there is a $(3/2)$-approximation 
for metric TSP~\cite{christofides1976worst} (which was slightly improved very recently~\cite{improvedTSP})
% to $(3/2)-\alpha$ for some fixed constant $\alpha$~\cite{improvedTSP},
and a PTAS for Euclidean TSP~\cite{arora1998polynomial, mitchell1999guillotine}.
However, it seems difficult to generalize these solutions to the case $k\geq 2$, 
because it seems non-trivial to get a grip on the structure of optimal solutions
in this case. We will mention some of the major challenges shortly.
%and an asymptotically exact algorithm~\cite{gimadi1974asymptotic} in the space of arbitrary fixed dimension. 

There has been a lot of work on such surveillance problems in the robotics 
community~\cite{Elmaliach:2008:RMF:1402383.1402397,6094844,liujointinfocom2017,yang2019patrol,6106761,6042503}.
Most previous work, however, focused on either practical settings or aspects of the
problems other than finding the best approximation factor. 
There are two papers that provide theoretical guarantees for the weighted version of the problem,
where sites of higher weight require more frequent patrols.
Alamdari~\etal~\cite{alamdari2014persistent} provided a $O(\log n)$-approximation algorithm 
for the weighted problem for $k=1$.
(Due to existence of weights, a TSP tour may no longer be optimal for $k=1$.) 
Afshani~\etal~\cite{wayf} studied the problem for $k\geq 1$ and they present
an $O(k^2\log \frac{w_{\text{max}}}{w_{\text{min}}})$-approximation algorithm,
where $w_{\text{max}}$ and $w_{\text{min}}$ are the maximum and the minimum weights of the sites.

\mypara{Related Problems.}
As already mentioned, the TSP problem can be viewed as a special case of the problem for unweighted sites and
for $k=1$.
Another related problem is the \emph{$k$-path cover} problem where we want to find
$k$ paths that cover the vertices of an edge-weighted graph such that the maximum length of the paths is minimized. 
This problem has a $4$-approximation algorithm~\cite{arkin2006approximations}. 
Another problem is the problem of covering all the sites with $k$ trees 
that minimize the maximum length of the 
trees; this problem is known as the \emph{min-max tree cover} problem and it has
constant-factor approximation algorithms~\cite{arkin2006approximations,khani2014improved} with
$8/3$ being the current record~\cite{xu2013approximation}. 
%The approximation ratio is relatively higher when covering the vertex set with cycles.
The \emph{$k$-cycle cover} problem is similar, except that we want to use
$k$ cycles (instead of paths or trees); again constant-factor approximation algorithms
are known, with $16/3$ being the current record~\cite{xu2013approximation}. 
If the goal is to minimize the sum of all cycle lengths, there is a
$2$-approximation for the metric setting and a PTAS in the Euclidean
setting~\cite{khachai2015polynomial,khachay2016polynomial}. 
Our problem is also related to (but different from) the \emph{vehicle routing problem} (VRP)~\cite{dantzig1959truck},
which asks for $k$ tours, starting from a given depot, that minimize the total
transportation cost under various constraints;
see the surveys by Golden~\etal~\cite{golden2008vehicle} or T\'oth and Vigo~\cite{toth2002vehicle}.

\ignore{
\begin{figure}
	\begin{tabular}{ccc}
		\begin{minipage}{.37\columnwidth}
    		\centering
    		\includegraphics[width=.6\linewidth]{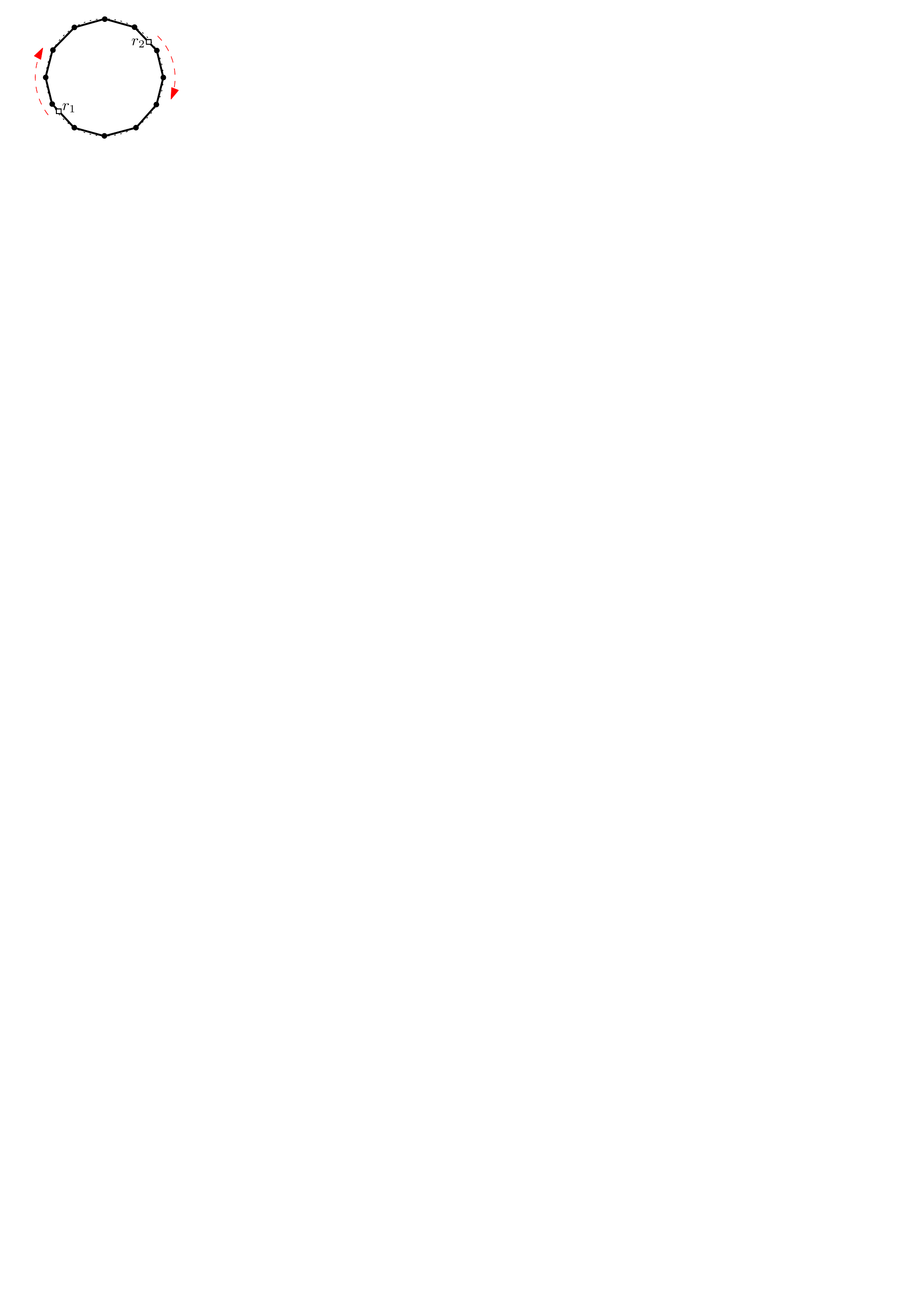}
	\end{minipage}  & 
		\begin{minipage}{.32\columnwidth}
    		\centering
    		\includegraphics[width=.8\linewidth]{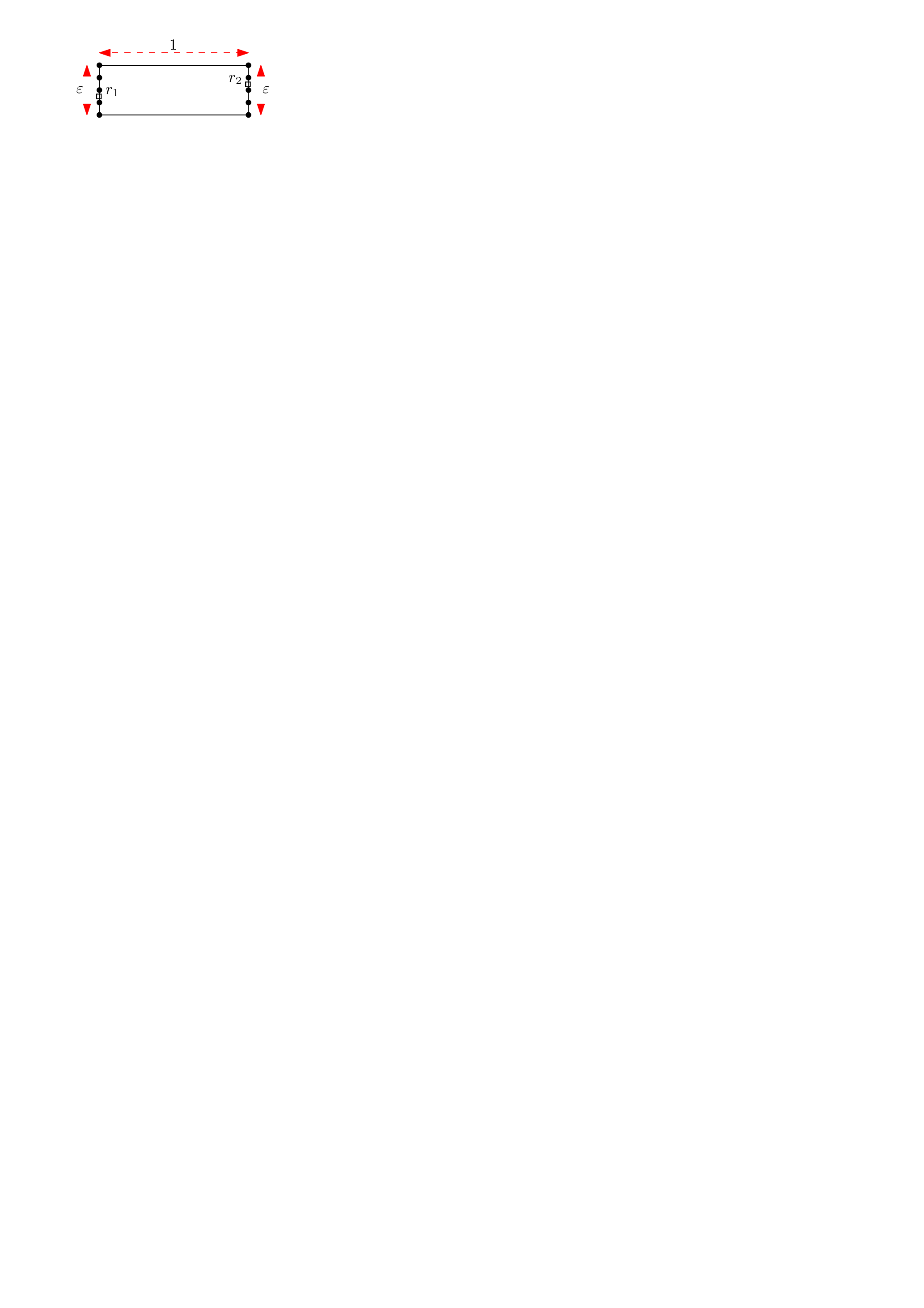}
	\end{minipage}  & 
			\begin{minipage}{.2\columnwidth}
    		\centering
    		\includegraphics[scale = 0.5]{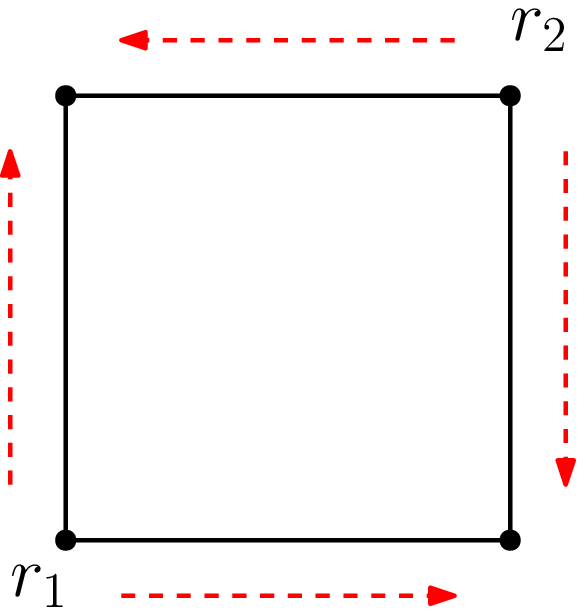}
	\end{minipage} 
	\end{tabular}
\caption{Left: Two robots with $n$ sites evenly placed on a unit circle. The optimal solution is to place two robots, maximum apart from each other, along the perimeter of a regular $n$-gon.
Middle: Two robots with two clusters of vertices of distance $1$ apart. The optimal solution is to have two robots each visiting a separate cluster.
Right: A non-cyclic optimal solution.}
\label{fig:example}
\end{figure}
}
%\jie{TBD: should we include work on fence patrolling? There is quite a bit of theoretical work there (e.g. “Boundary patrolling by mobile agents with distinct maximal speeds”, “On Fence Patrolling by Mobile Agents” etc, which seem related, in particular to the 1d case.}
%\haotian{Fence patrolling problem is introduced below.}
%
%\jie{Discuss~\cite{chevaleyre2004theoretical}.}
%\haotian{The result of \cite{chevaleyre2004theoretical} is shown below.}
%

\mypara{Our results.} 
%---------------------------------------------------------------------------------------
All covering problems mentioned above are obviously decidable.
The question of decidability for the patrolling problem seems non-trivial. However,
since patrol schedules are infinite sequences and thus it is not even clear how 
to guess a solution\footnote{If we assume that all distances are integers and we want to decide whether the latency is at most a given integer $\ell$, then we can guess a solution as laid out in Appendix~\ref{appendix:decide}. These assumptions, however, do not hold in the Euclidean case, even if the coordinates of sites are rational.}.
To tackle this issue, we consider the class of \emph{cyclic solutions}. In a cyclic solution
the set $P$ of sites is partitioned into $\ell \leq k$ subsets $P_1, \cdots, P_\ell$, 
and each subset~$P_i$ is assigned $k_i$ robots, where $\sum_{i=1}^{\ell}k_i =k$.
The $k_i$ robots are then distributed evenly along a TSP tour of $P_i$,
and they traverse the tour at maximum speed. Thus, the latency of the sites
in $P_i$ equals~$\|T_i\|/k_i$, where $\|T_i\|$ is the length of the TSP tour of~$P_i$.

The significance of this definition is that in Section~\ref{sec:cyclic-approx} we prove that
(in any metric space) the best cyclic solution is a $2(1-1/k)$-approximation of the
optimal solution in terms of maximum latency. 
We do this by transforming an optimal solution to a cyclic one, with only a $2(1-1/k)$ factor loss
in the approximation ratio.
This proof is highly non-trivial and involves a number of graph-theoretic arguments and carefully inspecting the
coordinated motion of the $k$ robots, cutting them up at proper locations, and
re-gluing the pieces together to form a cyclic solution.
In combination with this, in Section~\ref{sec:cyclic} 
we prove that, given a $\gamma$-approximation algorithm for TSP, for any fixed $k$ and $\varepsilon>0$,
we can obtain a $(1+\varepsilon)\gamma$-approximation of the best cyclic schedule in polynomial time. 
Therefore, in the Euclidean setting, we can use a known PTAS to obtain a $(1+\eps)$-approximation 
to the \emph{best cyclic} solution and in the general metric setting, we can use known 
approximation algorithms for TSP~\cite{improvedTSP} to get a
$1.5$-approximation to the \emph{best cyclic} solution.
Together with the results in Section~\ref{sec:cyclic-approx} these lead to a
$(2-2/k+\eps)$-approximation algorithm for the Euclidean case, and a
$(3-3/k$)-approximation for general metrics. 
%\kb{I removed the $+\alpha$ here. Since there is a 1.5-delta approx for tsp, we can choose epsilon small enough to stay below 1.5}

We conjecture that the best cyclic solution is in fact the best overall solution.
If this is true, then our algorithm in Section~\ref{sec:cyclic} already gives a PTAS
in the Euclidean setting. 
Observe that a corollary of our result in Section~\ref{sec:cyclic-approx} is that the conjecture
holds for $k=2$. We remark that there is an easy proof showing the existence of a cyclic $2$-approximation solution (See Section~\ref{subsec:challenges}). Our new bound $2(1-1/k)$ is a significant improvement when $k$ is a small constant. For example, for $k=3$, we get that a cyclic $4/3$ approximate solution exists, and for $k=2$ --as mentioned above-- that there is a cyclic optimal solution. %\jie{added the discussion on 2-approximate cyclic solution}
%\mdb{I'm confused, why 1.5 and not 4/3?}
%\mdb{I would perhaps write: For example for $k=2$ we get that a cyclic solution is optimal, and for $k=3$ that it is a (4/3)-approximate solution.} 
%\jie{You are right. I fixed it.}

\ignore{
\peyman{Need to move this part.}
First, partition the minimum spanning tree of the set of sites into a number of
connected components by removing all edges longer than a carefully chosen
threshold. Then, for each $\ell$ with $1\leq \ell\leq k$, take all possible
ways to combine the connected components into $\ell$ clusters. This gives us a
large collection of all the possible different partitions into clusters. For
every such partition we use the given $\gamma$-approximation algorithm for TSP
to approximate the optimal TSP tour of each cluster in the partition, and then
compute how to distribute the $k$ robots to these $\ell$ TSP tours so as to
minimize the maximum ratio of tour length to the number of robots. 
}
%and then we try all possible ways of distributing the $k$ robots over the $\ell$~TSP tours. 
%We prove that the best solution generated in this manner is a $(1+\eps)$-approximation to the best cyclic solution overall, for an $\eps$ that could be made arbitrarily small by adjusting the threshold. 

\section{Challenges, Notation, and Problem Statement}
\subsection{Notation and Problem Statement}
%------------------------------------------------------------------------------------
% \begin{table}[]
%     \centering
%     \begin{tabular}{|c|c|}
%         \hline
%         Description & Notation \\
%         \hline
%         min-max latency & $L$\\
%         \hline
%         Number of sites & $n$ \\
%         \hline
%         Sites & $s_1, s_2, \ldots, s_n$\\
%         \hline
%         Number of robots & $k$ \\
%         \hline
%         Robots & $r_1, r_2, \ldots, r_k$\\
%          \hline
%          Collection of robots & $R$\\
%          \hline
%          Schedule for collection of robots & $\sigma(R)$\\
%         \hline
%         Interval & $I_1, I_2, \ldots, l_k$\\
%         \hline
%         Weight of sites & $w_1, w_2, \ldots, w_n$\\
%         \hline
%         sequence of each robot & $\sigma_1, \sigma_2, \ldots, \sigma_k$\\ 
%         \hline
%         Cycle in the cyclic solution & $C_1, C_2, \ldots, C_l$\\
%         \hline
        
%         \end{tabular}
%     \caption{Notation \jie{For now we can use this table for references of notations. I suggest the in the final version we remove this table. }}
%     \label{tab:my_label}
% \end{table}

Let $(P,d)$ be a metric space on a set $P$ of $n$ sites, where the distance between two 
sites~$s_i,s_j \in P$ is denoted by $d(s_i,s_j)$. 
Following Afshani~\etal~\cite{wayf}, we model the metric space in the following way. 
We take the undirected complete graph $G=(P,P\times P)$, and we view each 
edge $(s_i,s_j)\in P\times P$ as an interval (that is, a continuous 1-dimensional space)
of length~$d(s_i,s_j)$ in which the robot can travel. This transforms the discrete
matric space~$(P,d)$ into a continuous metric space~$C(P,d)$.  From now on, and with 
a slight abuse of terminology, when we talk about 
the metric space $(P,d)$ we actually mean the continuous metric space $C(P,d)$.

We allow the robots to ``stay'' on a site for any amount of time. This implies
it never helps if a robot moves slower than the maximum speed:
indeed, the robot may as well move at maximum speed towards the next site and stay 
a bit longer at that site.
Also, it does not help to have a robot start at time $t=0$ ``in the middle'' of an edge,
so we can assume all robots start at some sites at the beginning. 
% \haotian{Do we need this argument about the starting locations in our following proof? 
% We might need to explain this argument with one or more sentences.} 
% \mdb{I think it is clear enough, but if we do not use it then we should remove it.}
% \peyman{I don't think we need it but it is possible these questions could be in readers mind so I find it
% useful to mention them. }
A \emph{schedule} of a robot~$r_j$ is defined as a
continuous function $f_j:\Reals^{\geq 0}\rightarrow C(P,d)$, where $f_j(t)$ specifies the position of~$r_j$ at time~$t$. The unit-speed constraint implies that
a valid schedule must satisfy $d(f_j(t_1),f_j(t_2))\leq |t_1-t_2|$ for all~$t_1,t_2$. 
A \emph{schedule for the collection~$R$ of robots}, denoted by~$\sigma(R)$, is a collection of schedules~$f_j$, 
one for each robot $r_j\in R$. Note that we allow robots to be at the same location at the same time.

% \peyman{I am removing the periodic definition, we don't seem to use it.}

% \toremove{
% We call the schedule of a robot~$r_j$ \emph{periodic} 
% if there exists an offset $t^*_j\geq 0$ and period length $\tau_j>0$
% such that for any integer $i\geq 0$ and any $0\leq t<\tau_j$ 
% we have $f_j(t^*_j + i\tau_j +t)=f_j(t^*_j + (i+1)\tau_j +t)$. A schedule $\sigma(R)$ is periodic 
% if there are $t^*_R\geq 0$ and $\tau_R> 0$ such that for any integer~$i>0$ and any $0\leq t<\tau_R$
% we have $f_j(t^*_R + i\tau_R +t)=f_j(t^*_R + (i+1)\tau_R +t)$ for all robots~$r_j\in R$.
% It is not hard to see that $\sigma(R)$ is periodic if and only if the schedules of all robots are periodic. 
% }
We say that a site $s_i\in P$ is \emph{visited} at time~$t$ 
if $f_j(t)=s_i$ for some robot~$r_j$. Given a schedule $\sigma(R)$, the \emph{latency} $L_i$ 
of a site $s_i$ is defined as follows.
\[
L_i = \sup_{0\leq t_1<t_2} \{ |t_2-t_1| : \mbox{$s_i$ is not visited during the time interval $(t_1,t_2)$} \}
\]
%\mdb{removed def of $V(s_i)$, do we use it?}
%
% That is, let $V(s_i)$ be the set of times $t\in \Reals^{\geq 0}$ such that $s_i$ is visited at time $t$
% and let $X(s_i)=\{(t,t')\mid t<t' \in V(s_i) \text{ and } (t,t')\cap V(s_i)=\emptyset\}$, 
% then $L_i=\sup_{(t,t')\in X(s_i)} |t-t'|$.
% All sites are considered to be visited at time zero, and we are free to place 
% each robot anywhere in $C(P,d)$ at time zero.
%
We only consider schedules where the latency of each site is finite. Clearly such
schedules exist; e.g., a robot can repeatedly traverse a TSP tour of the sites.
Given a metric space~$(P,d)$ and a collection $R$ of $k$ robots, 
the \emph{(multi-robot) patrol-scheduling problem} is to find a schedule~$\sigma(R)$ minimizing the latency 
$L := \max\limits_i L_i$, the maximum latency of any site.  
%    \mdb{Removed notation $\mathsf{RS}(P,d,k)$, since it does not seem useful.} 
% Then for a given metric space $(P,d)$ with $n$ sites, define the \emph{$k$ robot scheduling problem}, denoted $\mathsf{RS}(P,d,k)$, as the problem of selecting $k$ robot schedules so as to minimize the latency $L$.

%Thus, instead of specifying the location $f_j(t)$ at time~$t$ explicitly, we can also define the schedule of a robot~$r_j$ as an infinite sequence  $\sigma_j := \{s_{j,0}, \delta_{j, 0}\}, \{s_{j,1},  \delta_{j, 1}\} \ldots$, where $s_{j,0} \in P$ is the site at which $r_j$ starts at time~$t=0$ and $s_{j,i} \in P$ is the $i$-th site visited by $r_j$ after $s_{j,0}$, and $\delta_{j,i} \in \Reals^{\geq 0}$ is the waiting time of~$r_j$ at $s_{j,i}$. 

%\jie{The above definition seems to have typos. Also do we use this notation later? if not maybe we can skip?}
%\mdb{Fixed typos (I hope). But if we do not use it we should remove it.}
%\ben{I agree delete it if not used, which seems to be the case. Section 4 uses something similar, but it is defined there.}
\medskip

%  \rey{Do we really need $s_{i,0} \in P$? It might affect the performance of comparison case in weighted 1D scenario}\jie{I will check later, but for now I think it is ok, you may assume the robot can stop/pause at a site if needed}
  
%\jie{introduce weights later}
%There are two types of problems discussed in this paper:
%\begin{itemize}
%    \item \emph{Min Max Latency:} $\min \max_i F_i$, where $F_i$ is the latency for site $i$.
%    \item \emph{Min Max Weighted Latency:} $\min \max_i w_iF_i$, where $F_i$ is defined the same as before and $w_i$ is the weight of site $i$.
%\end{itemize}
%\end{problem}

%Unless stated otherwise, we consider the unweighted version of the $k$-robot scheduling problem as just defined. 
%In Appendix~\ref{subsec: appetizer approximation algorithms} we show it is not hard to get a constant approximation. Specifically, we obtain algorithms with approximation factor~$2\alpha$, where $\alpha$ is the approximation factor for any of the problems, \emph{$k$-path cover}~\cite{arkin2006approximations},
%\emph{$k$-min-max tree cover}~\cite{khani2014improved,xu2013approximation}, or 
%\emph{$k$-min-max cycle cover}~\cite{xu2013approximation}. 
%To the best of our knowledge, the best approximation ratio for any of these problems is~$8/3$ 
%(namely for the min-max tree cover problem). In this paper we try to get approximation factors
%of the multi-robot patrol-scheduling problem better than~$16/3$. 

%

\subsection{Challenges} \label{subsec:challenges}

The problem of scheduling multiple robots is quite challenging and involves several subtleties, caused by the fact that patrol schedules are infinite sequences. For example, the time intervals between consecutive visits of any given site might
increase continuously, and so we have to define the latency of a site using the notion of supremum 
rather than maximum. Moreover,
% the longer the robots monitor the sites.
for $k>1$, it is not even clear if the problem is decidable:
% \begin{problem}
    % Is the problem decidable for $k>1$? Specifically, 
    Given a set of $n$ points in the
    Euclidean plane, an integer $k>1$, and a value $L$, is it decidable
    if there exists a patrol schedule for the $k$ robots
    such that the maximum latency is bounded by $L$?
% \end{problem}
As already mentioned, a corollary of our results in Section~\ref{sec:cyclic-approx} is that 
for $k=2$ there exists an optimal cyclic solution and thus for $k=2$ the answer to the above question is yes. 

A severe challenge is that, since patrol schedules are infinite sequences,
it is difficult to rule out chaotic solutions where the robots  visit the
sites in a way that avoids any sort of repeated pattern.
\begin{figure}[ht]
    \centering
    \includegraphics[scale=0.7]{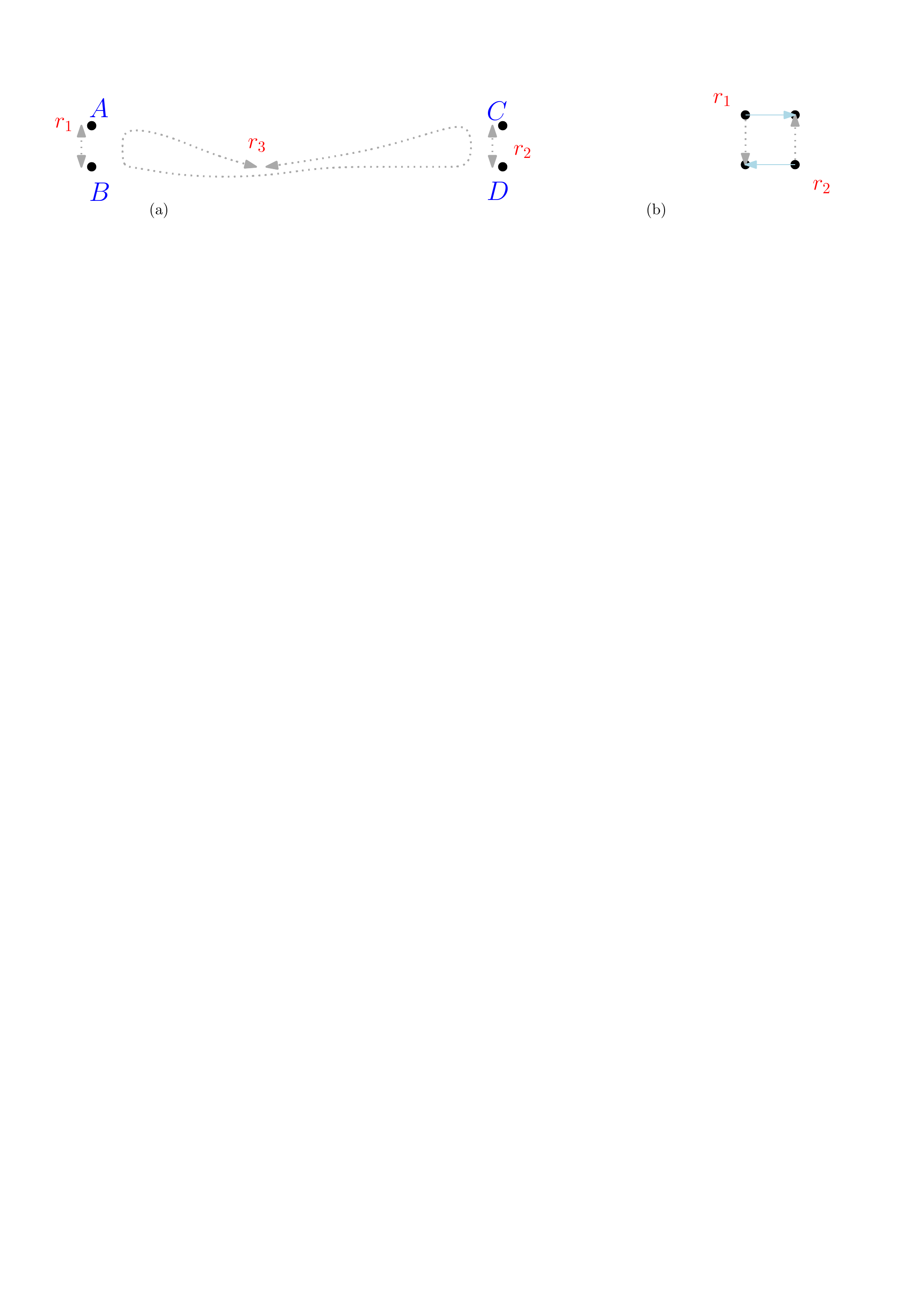}
    \caption{(a) Four points $A$, $B$, $C$, and $D$ form a short and wide rectangle. Robots $r_1, r_2$, and $r_3$ can have
    infinite ``unpredictable'' optimal patrol schedules.
    (b) Robots $r_1$ and $r_2$ can move to the other diagonal in two different ways. 
    % \mdb{Personally I find the "pictures" of the robots a bit strange and confusing.
    %    Furthermore, from the figure it is not clear that there are two "subfigures".}
    %   \peyman{Updated}
    }
    \label{fig:2groups}
\end{figure}
Indeed, optimal solutions can behave so chaotically 
that they require an infinite sequence of bits to describe. 
For instance, consider the left situation in Figure~\ref{fig:2groups}, 
where we have three robots and four points $A,B,C,D$
that are the vertices of a thin rectangle. To obtain the optimal latency, 
it suffices that $r_1$ moves back and forth between $A$ and $B$, and $r_2$
moves back and forth between $C$ and $D$. Since $r_3$ cannot be used to
decrease the latency---it will take $r_3$ too much time to go from
$A,B$ to $C,D$---it can behave as chaotically as it wants, thus causing
the description of the patrol schedule to be arbitrarily complicated.
This is even possible using only two robots: 
Consider four sites that form a unit square and two robots placed 
on opposite corners of the square; see the right situation in Figure~\ref{fig:2groups}. 
An optimal schedule is then an infinite sequence of steps, where in each step
both robots move counterclockwise or both move clockwise. Such a schedule
need not be cyclic and, hence, may require an infinite sequence of bits to describe. 
Of course, in both cases we know optimal cyclic solutions exist,
and such solutions can be described using finitely many bits.
We conjecture that this should be true in general:

\begin{conjecture}
\label{conj:opt_cyclic}
For the $k$-robot patrolling problem with min-max latency, there is a cyclic solution that is optimal. 
\end{conjecture}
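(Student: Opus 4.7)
The plan is to strengthen the transformation of Section~\ref{sec:cyclic-approx} so that converting an optimal, possibly chaotic, schedule into a cyclic one preserves the latency exactly rather than losing the $2(1-1/k)$ factor. I would proceed in three stages: first, replace the optimal schedule by an asymptotically stable one via a limiting argument; then identify a natural partition of the robots into groups serving common subsets of sites; and finally re-thread the aggregate motion of each group into a TSP-based cyclic schedule.

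Given an optimal schedule $\sigma^*$ with latency $L^*$, consider the time-shifted family $\sigma_T(t) := \sigma^*(t+T)$. Since each robot's trajectory lies in the compact continuous metric space $C(P,d)$ and is $1$-Lipschitz, the Arzel\`a--Ascoli theorem yields a subsequence converging uniformly on compact time-intervals to a schedule $\sigma_\infty$ whose latency is still at most $L^*$. By further iterating this averaging and passing to Ces\`aro means, one may hope to arrange that $\sigma_\infty$ has well-defined long-run visit frequencies $f_{j,i}\ge 0$ for every robot--site pair, satisfying $\sum_j f_{j,i}\ge 1/L^*$ for every site $s_i\in P$.

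Next, form the bipartite ``service graph'' whose edges are the pairs $(r_j, s_i)$ with $f_{j,i}>0$, and let its connected components yield groups $(R_1,P_1),\dots,(R_\ell,P_\ell)$ with $|R_h|=k_h$ and $\{P_1,\dots,P_\ell\}$ a partition of $P$. Because sites in $P_h$ are visited only by robots in $R_h$, the restriction of $\sigma_\infty$ to $R_h$ is itself an optimal schedule on $P_h$ using $k_h$ robots. The conjecture then reduces to the following key claim: for every group there is a closed curve $\gamma_h$ through $P_h$ of length at most $k_h L^*$, since then $\|\TSP(P_h)\|\le k_h L^*$ and spacing the $k_h$ robots equally along the TSP tour produces a cyclic schedule of latency at most $L^*$.

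The main obstacle, and precisely the step where the $2(1-1/k)$ slack of Section~\ref{sec:cyclic-approx} enters, is this key claim for $k_h\ge 3$. The approximate proof cuts the aggregate trajectory of a group and re-glues it using ``connector'' edges that inflate its total length; a tight proof must instead exhibit an Eulerian-style reconstruction in which the combined motion of the group over one virtual period already forms a closed curve of length $\le k_h L^*$ covering $P_h$. I would attempt a potential-function argument tracking, around a candidate common cycle, the positional offsets of the $k_h$ robots together with the ages of the most recent visits to each site, and use it to argue that the total slack absorbed by the extra robots is exactly what an evenly-spaced cyclic schedule would spend. A natural route is induction on $k_h$, merging one robot's trajectory into another's at each step, with the base cases $k_h=1$ (the TSP argument) and $k_h=2$ (the result of Section~\ref{sec:cyclic-approx} for $k=2$) already in hand. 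Closing the inductive step without any loss is, I expect, where the bulk of the difficulty lies, and it may well require a more global topological or flow-theoretic understanding of how a group of robots jointly covers its sites than the surgery used for the approximate bound.
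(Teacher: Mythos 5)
This statement is a \emph{conjecture} in the paper: the authors explicitly leave it open, and what Section~\ref{sec:cyclic-approx} actually proves is only the weaker $2(1-1/k)$-approximation (Theorem~\ref{thm:opt-cyclic}). So there is no ``paper proof'' to compare against; the question is whether your argument closes the gap, and it does not --- you say so yourself at the end, and the earlier steps also have real holes.

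Concretely, the Arzel\`a--Ascoli step is fine as far as it goes (the graph $C(P,d)$ is compact, trajectories are $1$-Lipschitz, and the latency bound passes to uniform limits), but the next sentence --- ``By further iterating this averaging and passing to Ces\`aro means, one may hope to arrange well-defined long-run visit frequencies'' --- is not an argument. Ces\`aro averaging of positions is not even defined in a general metric space, and even if one works with visit-counting measures rather than positions, there is no reason long-run frequencies need exist for the limiting schedule, nor that arranging them to exist preserves latency. More importantly, the reduction via the ``service graph'' has a substantive gap: a robot $r_j$ can visit a site $s_i$ with asymptotic frequency $f_{j,i}=0$ (e.g., at times $1,10,100,\dots$) yet those visits can still be what keeps $s_i$'s latency below $L^*$ over some stretches. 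Deleting all zero-frequency edges therefore need not leave a schedule of latency $\le L^*$ for each group, so ``the restriction of $\sigma_\infty$ to $R_h$ is itself an optimal schedule on $P_h$'' does not follow. And in the worst case the service graph is connected, so the partition into groups buys nothing and you are left with your ``key claim'' for the full set $P$ with all $k$ robots.

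That key claim --- that the combined motion of a group over one virtual period can be re-threaded into a single closed curve of length $\le k_h L^*$ covering $P_h$, with \emph{no} loss --- is essentially a restatement of the conjecture for that group (and it is slightly stronger than needed, since the conjecture allows the group to split into several cycles). You correctly identify that this is exactly where the $2(1-1/k)$ slack enters in the paper's surgery, but you do not supply the potential-function or flow-theoretic argument you gesture at, and the proposed induction on $k_h$ ``merging one robot's trajectory into another's'' is not carried out; there is no reason offered that merging a $(k_h-1)$-robot cyclic solution of latency $\le L^*$ with one more trajectory yields a $k_h$-robot cyclic solution of latency $\le L^*$. As written, the proposal is a plausible research plan, not a proof: the preliminary reductions are not tight, and the crucial inductive step --- the actual content of the conjecture --- is left open.
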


%\jie{Adding the 2-approximation}
%\mdb{I slightly rephrased.}
It is easy to see that there exists a cyclic solution that is a $2$-approximation: 
take an optimal schedule with latency $L$, and at time $L$ move the robots back to 
their respective starting positions at time~$0$, and repeat. The challenge lies in getting an approximation factor smaller than~2, which we achieve in Section~\ref{sec:cyclic-approx} where we show that there is a cyclic solution 
that is a $2(1-1/k)$ approximation. 

\section{Turning an Optimal Solution into a Cyclic Solution}
\ifinmain
    \label{sec:cyclic-approx}
\else
    \label{app:cyclic-approx}
\fi

%-----------------------------------------------------------------------------------
The main goal of this section is to prove the following theorem.
%-----------------------------------------------------------------------------------
\ifinmain
    \begin{restatable}{theorem}{thmcyclic}\label{thm:opt-cyclic}
    Let $L$ be the latency in an optimal solution to the $k$-robot patrol-scheduling problem
    in a metric space~$(P,d)$. There is a cyclic solution with latency at 
    most~$2(1-1/k) L$.
    \end{restatable}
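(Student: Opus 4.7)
My plan is to transform an optimal schedule $\sigma^{*}$ of latency $L$ into a cyclic solution whose latency is at most $2(1-1/k)L$, by studying the joint motion of the $k$ robots, cutting their trajectories at carefully chosen points, and re-assembling the pieces into closed tours with a balanced allocation of robots. The starting observation is the consequence of optimality: in any time window of length $L$ the union of the $k$ robot trajectories visits every site in $P$. Restricting to $[0,L]$, one gets $k$ walks $W_1,\ldots,W_k$, each of length at most $L$, whose union covers $P$. The naive $2$-approximation turns each $W_j$ into a back-and-forth tour of length at most $2L$ and runs one robot on each; to shave the factor $(1-1/k)$ I would need the robots to share work so that only $k-1$ of the $k$ implicit ``return trips'' are actually performed.

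I would formalize this via an auxiliary multigraph $G$ whose vertices are chosen cut points along the trajectories---including endpoints of the $W_j$'s and locations where two robots coincide or are in close metric proximity---and whose edges are the resulting trajectory segments, plus a few shortcut edges (justified by the triangle inequality) added to equalize degree parities. By Euler's theorem, once all degrees are even, $G$ decomposes into edge-disjoint closed walks $C_1,\ldots,C_\ell$ whose union still visits every site; each $C_i$ shortcuts to a cycle tour $T_i$ of its sites, inducing a partition $P=P_1\cup\cdots\cup P_\ell$. I would then allocate the $k$ robots to the cycles to minimize $\max_i\,|T_i|/k_i$; by a standard averaging argument this maximum is at most roughly $\bigl(\sum_i|T_i|\bigr)/k$, so it suffices to bound the \emph{total} length of the cycles by $2(k-1)L$.

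The heart of the proof, and the main obstacle I anticipate, is precisely this length bound: making the choice of cuts and gluings so that $\sum_i |T_i|\leq 2(k-1)L$, beating the naive $2kL$ by a full walk's worth of length. I expect one must identify, using the coordinated motion of the $k$ robots over $[0,L]$ (or a slightly longer horizon), moments where one robot can ``hand off'' coverage to another so that a single return trip suffices for a whole group rather than for each robot individually. Likely this involves a graph-theoretic argument about parities at the cut vertices, a pigeonhole or exchange argument identifying which robot's return may be eliminated, and a rounding step to turn the continuous length ratios $|T_i|/(k_i L)$ into an integer allocation summing to $k$. A sanity check is the case $k=2$, where the bound collapses to $L$ and the construction should recover an optimal cyclic solution, matching the conjectured optimality of cyclic solutions for two robots.
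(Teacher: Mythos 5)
Your high-level strategy — cut the trajectories over a window of length $L$, build an auxiliary multigraph from the pieces plus shortcuts, Eulerize, reassemble into closed tours, and then allocate robots — is the same skeleton the paper uses. But two of your steps have genuine gaps.

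First, the averaging step is wrong as stated. With $\ell$ tours of lengths $|T_1|,\ldots,|T_\ell|$ and $\sum_i k_i=k$, the optimal value of $\max_i |T_i|/k_i$ is \emph{at least} $\bigl(\sum_i|T_i|\bigr)/k$, not at most. If one tour is long and the others are negligible (say lengths $M,\varepsilon,\varepsilon$ with $k=3$), the best max is $M$, far above $(M+2\varepsilon)/3$. So bounding $\sum_i|T_i|\le 2(k-1)L$ would not by itself give the claimed latency. The paper avoids this trap entirely: it does not argue through a global total-length budget. Instead it analyzes each connected component of the patrol graph separately, distinguishing three cases according to how many degree-$1$ vertices the (contracted) component has, and in each case pairs the component with a suitable number of ``useful'' robots plus, when the structure permits, a ``useless'' robot whose existence is certified by the combinatorics (a degree-$1$ vertex in the shortcut graph must point to the bag of an empty interval). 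The bound $2(1-1/k)L$ then holds \emph{per component}, which is what is actually needed.

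Second, ``a few shortcut edges justified by the triangle inequality'' is not enough: you need the shortcuts to be short enough to be absorbed into the length budget, and the triangle inequality alone gives no quantitative control. The paper's key technical device is the bidirectional shrinking sweep: each robot's interval $[t_0,t_0+L]$ is shrunk symmetrically until further shrinking would leave some site uncovered, and the endpoints freeze one at a time. The frozen endpoint $A$ at offset $\ell(A)$ certifies that the site visited there cannot be revisited during the ``red'' middle window, hence must be revisited by some robot in a window of radius $\ell(A)$ around $t_0+L$. This is exactly what yields the shortcut bound $d(s_1,s_2)\le\ell(A)+\ell(B)$ (Lemma~\ref{lem:shortcut}), and the crucial accounting identity $\ell(\mathrm{Start}(e))+\ell(\mathrm{End}(e))+\text{(length of the surviving interval)}=L$ per black edge. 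Without this mechanism the shortcut costs have no reason to cancel against the shrunken parts of the trajectories, and the length budget does not close. You do gesture at the right intuition (``moments where one robot can hand off coverage to another''), but the precise device — shrinking intervals guided by the latency constraint, and the resulting bag/shortcut/line-graph machinery with matching and $2$-path decompositions — is the heart of the proof and is missing from your sketch.
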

\else
    \thmcyclic*
\fi
%
%Theorem~\ref{thm:opt-cyclic} implies that for $k=2$ an optimal cyclic solution is optimal overall.
We prove the theorem by considering an optimal (potentially ``chaotic'') solution and 
turning it into a cyclic solution. 
This is done by first identifying a certain set of ``bottleneck'' sites within a time interval
of length $L$, then cutting the schedules into smaller pieces, 
and then gluing them together to obtain the final cyclic solution. 
This will require some graph-theoretic tools (Appendix~\ref{app:graph}) as well as several new ideas (Appendices~\ref{app:sweep}--\ref{app:cyclic}).

\ifinmain

    Below we sketch the main ideas of the proof; 
    the full proof can be found in Appendix~\ref{app:cyclic-approx}.
    We also require some graph theoretic arguments that are presented in 
    Appendix~\ref{app:graph}.

\else
    %\thmmain*
    %The rest of this section is devoted to proving Theorem~\ref{thm:opt-cyclic}.

    \subsection{Graph Theoretic Preliminaries}\label{app:graph}
    Here, we use the term graph to include multigraphs 
(i.e., those with multiple edges between the same vertices and with loops).
A simple graph is one that does not have multiple edges or loops. 
A connected graph is Eulerian if all of its vertices have even degree. 
One of the earliest results in
graph theory is that an Eulerian graph has an Euler tour, i.e., a closed walk that visits every edge 
exactly once. 
Given a graph $G$, Eulerizing $G$ is the problem of duplicating the minimum number of edges of $G$ until the
resulting graph is Eulerian. 
We will use the following lemma.
\begin{lemma}\label{lem:eulerizing}
    Let $G$ be a connected graph with $E$ edges.
    \begin{enumerate}
        \item $G$ can always be Eulerized by duplicating all $E$ edges.
        \item If $G$ has exactly one vertex of degree 1, then we can Eulerize $G$ by duplicating at most  $E-1$ edges.
        \item If $G$ has no vertex of degree 1, then we can Eulerize $G$ by duplicating at most $E-2$ edges.
    \end{enumerate}
\end{lemma}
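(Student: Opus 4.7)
My plan is to use a duality between edge-duplications and \emph{even subgraphs} of $G$, where an even subgraph is a subset $S \subseteq E$ of edges in which every vertex is incident to an even number of edges of $S$. The key observation is: duplicating $D := E \setminus S$ yields an Eulerian multigraph precisely when $S$ is even, because each vertex $v$ ends up with degree $d_G(v) + d_D(v) = 2 d_G(v) - d_S(v)$, which is even exactly when $d_S(v)$ is even; connectivity is automatic since duplication only adds edges. Since $|D| = E - |S|$, to prove Parts (1)--(3) of the lemma it suffices to exhibit even subgraphs of sizes $0$, $1$, and $2$ respectively.

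Part (1) is immediate by taking $S = \emptyset$.

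For Part (2), I will argue that $G$ must contain a cycle. A connected graph with exactly one vertex of degree $1$ cannot be a tree, since any tree with two or more vertices has at least two leaves. Hence $G$ contains a cycle $C$ (possibly a loop or a pair of parallel edges, in the multigraph setting), and $E(C)$ is an even subgraph with $|E(C)| \geq 1$.

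For Part (3), which is the most delicate case, I split on whether $G$ has a ``genuine'' multi-edge cycle. If $G$ contains a cycle of length at least $2$ (two or more distinct edges, such as a pair of parallel edges or a triangle), I take it as the required even subgraph. Otherwise every cycle of $G$ is a single loop, so removing all loops from $G$ leaves a connected acyclic multigraph, i.e.\ a tree $T'$. If $T'$ has at least two vertices then each of its leaves has degree $1$ in $T'$ and must carry at least one loop in $G$ (otherwise it would violate the degree-at-least-$2$ hypothesis in $G$), producing at least two loops overall; if $T'$ is a single vertex then the assumption $E \geq 2$ directly gives at least two loops at that vertex. In either subcase, two loops (at possibly the same or different vertices) form an even subgraph of size $2$, so $|D| \leq E - 2$. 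The main obstacle in this plan is the case analysis in Part (3): when all cycles of $G$ are loops, one must use the minimum-degree-$2$ condition carefully to rule out degenerate configurations and extract two loops forming the required even subgraph.
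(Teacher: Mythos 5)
Your proof is correct and proceeds by a genuinely different route from the paper. The paper's argument is a spanning-tree / $T$-join argument: fix a spanning tree $T$, pair up the odd-degree vertices, choose the pairing minimizing the total length of the connecting tree paths, observe that these paths are then edge-disjoint, and duplicate them; the number of duplicated edges is at most $|T|=n-1$, and the three bounds follow from $n-1\le E$ always, from $n-1\le E-1$ when $G$ has a unique leaf (so $G$ is not a tree), and from $n-1\le E-2$ when the minimum degree is at least $2$, after separately disposing of the exceptional case $E=n$ where $G$ must be a single cycle and is already Eulerian. You instead work from the complementary side: duplicating $D=E\setminus S$ changes $\deg(v)$ to $2d_G(v)-d_S(v)$, so it suffices to exhibit an even subgraph $S$ of size $0$, $1$, or $2$, which you extract from a cycle or from loops. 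This is arguably cleaner: it dispenses with the spanning tree, the length-minimizing pairing, and the edge-disjointness argument, replacing them with a short case split on whether all cycles of $G$ are loops, and it makes explicit the natural correspondence between Eulerizing sets and members of the cycle space. Both proofs are correctly multigraph-aware (you treat loops and parallel edges as cycles contributing even degree). One small remark: in your last subcase, where $T'$ is a single vertex, you invoke ``the assumption $E\ge 2$,'' which is nowhere stated in the lemma; this is harmless, since the claimed bound $E-2$ is vacuous (negative) for $E\le 1$, but it deserves an explicit word. The paper's proof silently has the same degeneracy: when $G$ is a single vertex with one loop it observes that $G$ is already Eulerian and duplicates $0$ edges, which is not $\le E-2=-1$.
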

\begin{proof}
    The first claim is trivial.
    For the other two parts, let $T$ be a spanning tree of $G$.
    To make $G$ Eulerian, we need to make all the degrees even.
    Let $O$ be the number of vertices with odd degree in $T$.
    First observe that $O$ is an even number.
    Pair the vertices of odd degree into $O/2$ pair $v_i, u_i$, $1 \le i \le O/2$ and consider the
    path $\pi_i$ in $T$ that connects $v_i$ to $u_i$.
    Consider a pairing that minimizes the total length of the paths $\pi_i$.

    \begin{figure}[h]
        \centering
        \includegraphics[scale=0.5]{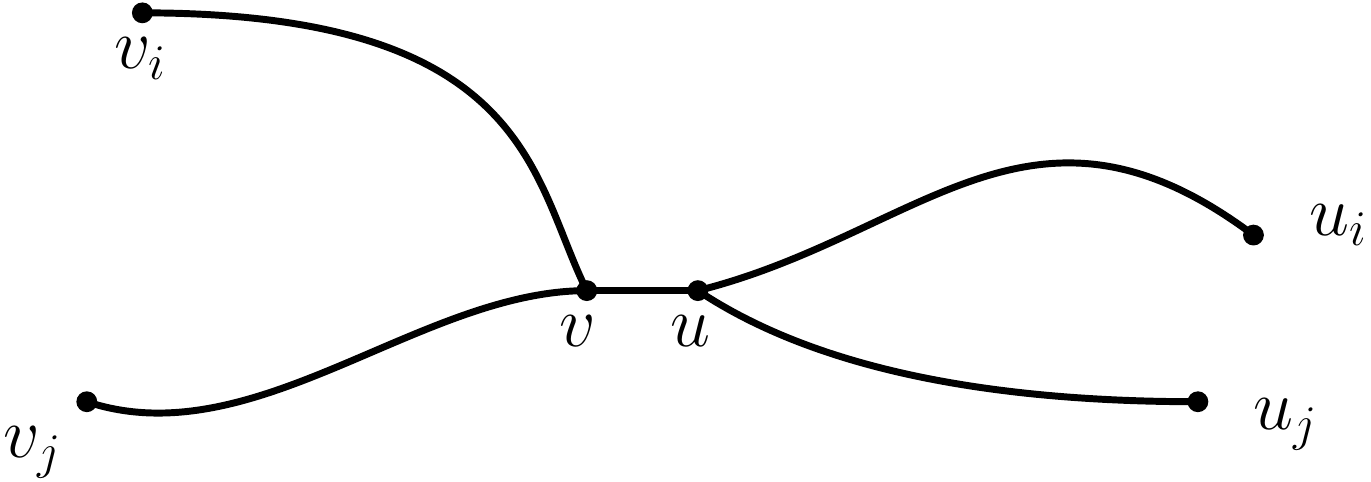}
        \caption{Two paths cannot share an edge, if the total path length is minimized.}
        \label{fig:dist}
    \end{figure}
    Observe that the paths 
    $\pi_1, \cdots, \pi_{O/2}$ will be edge disjoint. To see this, assume
    $\pi_i$ (connecting $v_i$ to $u_i$) and $\pi_j$ (connecting $v_j$ to $u_j$) share an edge $e= (v,u)$. 
    W.l.o.g, we can assume that on the path $\pi_i$ (resp. $\pi_j$) $v$ is closer to $v_i$ (resp. $v_j$) than $u$.
    See Figure~\ref{fig:dist}.
    Let $\pi_i(v_i,v)$ denote the subpath of $\pi_i$ from $v_i$ to $v$, and define $\pi_i(u,u_i)$, $\pi_j(v,v_j)$,
    and $\pi_j(u_j,u)$ similarly. Then the total length of the paths
    $\pi_i(v_i,v)\circ \pi_j(v,v_j)$ and $\pi_j(u_j,u)\circ \pi_i(u,u_i)$
    is one smaller than the total length of $\pi_i$ and $\pi_j$, which contradicts the choice of the pairing.
    Thus, the pairing that minimizes the total length of the paths consists of edge-disjoint paths. 
    
    After considering such a pairing, we simply duplicate all the edges on the paths $\pi_i$.
    Note that each edge in tree $T$ is duplicated at most once. 
    Now, the second claim in the lemma easily follows: as $G$ has exactly one vertex of degree 1, it is not a
    tree which means $T$ has at most $E-1$ edges and thus we duplicate at most $E-1$ edges to Eulerize $G$.

    For the third claim, we need some further case analysis. 
    Observe that if $T$ has at most $E-2$ edges, then we are done.
    Thus, assume $T$ has $E-1$ edges, so $G$ is the union of $T$ and one edge~$e$.
    But by our assumptions, $G$ does not have a degree 1 vertex which means $T$ has at most
    two leaves; but this means that $T$ is a tree with at most 2 leaves and thus $T$ can only be a path
    with $E-1$ edges and furthermore, both of these leaves must be connected by $e$ in $G$.
    Consequently, this implies that $G$ is a cycle but in that case, 
    $G$ is already Eulerian and thus nothing needs to be duplicated in this case. 
\end{proof}

A 2-path is the graph of two adjacent edges.
A $K_3$ or a triangle is the clique of three vertices and a claw is the complete bipartite graph $K_{1,3}$.
For a graph $G$, the line graph of $G$, denoted by $\overline{G}$, is a graph whose
vertices are the edges of $G$ and two vertices in $\overline{G}$ are connected if and only if 
the two corresponding edges in $G$ are adjacent. 
Observe that a perfect matching $M$ in $\overline{G}$ corresponds to a partition of edges of $G$ into 2-paths
and vice versa. 
The following theorems are known about the line graphs.
\begin{theorem}\label{thm:evenline}
    If $G$ is connected and it has an even number of edges, then $\overline{G}$ has a perfect matching $M$.
    Consequently, $M$ yields a partition of edges of $G$ into  
    a number of 2-paths.
\end{theorem}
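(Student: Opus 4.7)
The plan is to produce the perfect matching of $\overline{G}$ algorithmically, via a post-order DFS on a spanning tree of $G$. Fix an arbitrary spanning tree $T$ of $G$, root it at a vertex $r$, and orient each edge of $G$ so that its endpoint closer to $r$ owns it; equivalently, tree edges are owned by the parent and back edges by the ancestor. Write $o(v)$ for the number of edges owned by $v$, so that $\sum_v o(v)=|E(G)|$. I would then process the vertices in post-order, keeping a bit $f(v)\in\{0,1\}$ that records whether $v$ ``consumes'' its parent tree edge. At each vertex $v$, consider the effective owned set $E(v)$---those edges owned by $v$ that have not already been matched by a descendant. If $|E(v)|$ is even, pair its edges arbitrarily into $2$-paths through $v$ and set $f(v)=0$; if $|E(v)|$ is odd, pair all but one of its edges arbitrarily, then pair the leftover edge with the tree edge from $v$ to its parent (still unmatched because the parent has not yet been processed), and set $f(v)=1$.

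Next I would verify correctness by a parity calculation. An owned edge at $v$ is matched by a descendant only if it is a tree edge $v$--$c$ consumed at $c$'s turn, i.e., $f(c)=1$; back edges owned by $v$ are never consumed, since the algorithm only ever consumes parent tree edges. Thus $|E(v)|\equiv o(v)-\sum_{c}f(c)\pmod{2}$, where the sum ranges over the children $c$ of $v$. This gives the recurrence $f(v)\equiv o(v)-\sum_{c}f(c)\pmod{2}$. Summing this over all vertices of $T$, and using $\sum_v o(v)=|E(G)|$ together with the fact that every non-root vertex appears as a child of exactly one parent, one obtains $f(r)\equiv |E(G)|\pmod{2}$.

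The one step at which the algorithm could fail is the root: $r$ has no parent edge to consume, so it can only finish matching if $|E(r)|$ is even, equivalently $f(r)=0$. The main obstacle is therefore this root-parity check, but the calculation above delivers $f(r)=0$ exactly when $|E(G)|$ is even, which is our hypothesis. Hence the algorithm succeeds, and the resulting collection of $2$-paths corresponds to a perfect matching of $\overline{G}$. A much shorter but less self-contained proof is available via Sumner's theorem, which states that every connected claw-free graph on an even number of vertices has a perfect matching: $\overline{G}$ is claw-free for any $G$ and connected whenever $G$ is connected with at least two edges, so the conclusion follows immediately.
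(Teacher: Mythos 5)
Your proposal is correct, and it takes a genuinely different route than the paper. The paper states Theorem~\ref{thm:evenline} without proof and implicitly relies on Sumner's theorem (line graphs are claw-free, so the line graph of a connected graph with at least two edges and an even edge count has a perfect matching), as signaled by the proof of the companion Theorem~\ref{thm:line}, which ``basically follows from Sumner's proof.'' You instead give a self-contained constructive argument: a post-order traversal of a rooted spanning tree where each vertex pairs its surviving owned edges and possibly consumes its parent tree edge, together with the parity invariant $f(v)\equiv o(v)-\sum_c f(c)\pmod 2$, which telescopes to $f(r)\equiv|E(G)|\pmod 2$. The argument is sound: every pair formed at $v$ consists of two edges incident to $v$, hence a $2$-path in $G$ and an edge of $\overline{G}$, and the telescoping guarantees the root has no stranded edge exactly when $|E(G)|$ is even, so every edge of $G$ is matched exactly once. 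One small imprecision: you begin with an ``arbitrary spanning tree'' but then speak of ``back edges,'' which is DFS-tree terminology; for a general rooted spanning tree, non-tree edges may be cross edges whose endpoints have equal depth, where ``closer to $r$'' needs a tiebreak. This does not affect the parity accounting, since the only owned edges of $v$ removed before its turn are the tree edges to children $c$ with $f(c)=1$, regardless of where the other owned edges go. Your argument also handles the multigraph setting the paper actually needs (bag graphs can have parallel edges; loops do not arise since $\gB$ is bipartite). What each route buys: yours is elementary, algorithmic, and self-contained; the Sumner route is shorter if one is willing to cite the result, as the paper does and as you yourself note as an alternative.
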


However, we need something a bit more general about line graphs.
\begin{theorem}\label{thm:line}
    Consider a connected graph $G$ with odd number of edges.
    For any vertex $v$ in $G$, we can find a partition of edges of $G$ into
    a number of 2-paths plus an edge adjacent to $v$.
\end{theorem}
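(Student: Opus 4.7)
The plan is to proceed by induction on $|E(G)|$, using Theorem~\ref{thm:evenline} as a black box whenever a subgraph that arises has an even number of edges. The base case $|E(G)|=1$ is immediate: the unique edge must be incident to $v$ since $G$ is connected and contains $v$, so that edge alone (together with zero 2-paths) satisfies the statement. For the inductive step, with $|E(G)|\ge 3$, I would split on whether $v$ has a non-bridge incident edge.

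If some edge $e=(v,u)$ incident to $v$ is not a bridge, then $G-e$ is connected and has $|E(G)|-1$ (even) edges, so Theorem~\ref{thm:evenline} partitions $E(G-e)$ into 2-paths; designating $e$ as the single edge at $v$ finishes this case.

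The harder case is when every edge incident to $v$ is a bridge. If $\deg(v)=1$, the unique edge $e$ at $v$ is a bridge and $G-v$ is connected with $|E(G)|-1$ (even) edges, so Theorem~\ref{thm:evenline} applied to $G-v$, plus the edge $e$ set aside, finishes the case. Otherwise $\deg(v)=d\ge 2$ and $v$ is a cut vertex, so $G-v$ has exactly $d$ components $C_1,\dots,C_d$; let $G_i$ be the subgraph on $C_i\cup\{v\}$ consisting of $C_i$ together with the unique bridge from $v$ into $C_i$. Each $G_i$ is connected with strictly fewer edges than $G$, the $G_i$'s share only the vertex $v$, and $\sum_{i=1}^{d}|E(G_i)|=|E(G)|$ is odd, so an odd number $j\ge 1$ of the $G_i$'s have an odd number of edges. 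For each $G_i$ of even size I apply Theorem~\ref{thm:evenline} directly to obtain a 2-path partition; for each $G_i$ of odd size I invoke the induction hypothesis with the same distinguished vertex $v$, obtaining a partition plus one leftover edge $e_i$ incident to $v$. These $j$ leftover edges at $v$ can then be paired into $(j-1)/2$ additional 2-paths centered at $v$ (since $j$ is odd), leaving exactly one single edge at $v$ as required.

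The only real obstacle is the all-bridges case: one must set up the decomposition at $v$ so the recursion is well-founded (which requires the $\deg(v)=1$ subcase to be peeled off separately) and verify that the parities line up. The oddness of $|E(G)|$ is precisely what forces an odd number $j$ of odd-sized $G_i$'s, and that oddness is in turn precisely what lets the leftover edges at $v$ be paired into 2-paths with a single unpaired edge remaining. Once this bookkeeping is in place, Theorem~\ref{thm:evenline} does all the heavy lifting.
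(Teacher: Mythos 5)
Your proof is correct, and it is genuinely different from the paper's. The paper re-runs a Sumner-style induction directly on $G$: it fixes a BFS tree rooted at $v$, repeatedly strips off 2-paths incident to non-tree edges or to a leaf farthest from $v$, and terminates when a single tree edge at the root remains. You instead treat Theorem~\ref{thm:evenline} as a black box and reduce the odd-edge case to the even-edge case: if $v$ has a non-bridge incident edge, remove it and invoke Theorem~\ref{thm:evenline} on the connected even remainder; if every edge at $v$ is a bridge, either peel off a leaf $v$ (again landing in the even case) or decompose at the cut vertex $v$ into edge-disjoint "lobes" $G_i$ sharing only $v$, recurse on the odd-sized lobes, and pair the resulting leftover edges at $v$ (the oddness of $|E(G)|$ forces an odd number of odd lobes, which is exactly what makes the pairing leave one unmatched edge at $v$). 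Your approach is more modular and avoids the BFS-tree bookkeeping, at the cost of a somewhat more structural case split (bridges, cut vertices, parity of lobe sizes); the paper's approach is more elementary in the sense that it does not rely on Theorem~\ref{thm:evenline} at all and instead reproves everything by direct edge-stripping. Both arguments are sound, including in the multigraph setting needed downstream (in the all-bridges case there can be no loops or parallel edges at $v$, so your cut-vertex decomposition is unaffected).
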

\begin{proof}
    The proof basically follows from Sumner's proof~\cite{sumner74}. 
    Pick any vertex $v$ in $G$.
    We use induction, meaning, we assume that the claim holds for any connected subgraph of $G$ that
    includes the vertex $v$.
    Let $T$ be the BFS traversal of ${G}$ with root $v$. 
    If there is a vertex $u$ such that $u$ is adjacent to two edges $e_1$ and $e_2$ with
    $e_1, e_2 \not \in T$, then we can remove $e_1$ and  $e_2$ as a 2-path and the remaining graph will
    stay connected (through $T$).
    And thus our claim follows by induction. 
    So in the rest of this proof assume that every vertex is adjacent to at
    most one edge that is not in $T$.

    Let $u$ be a leaf in $T$ that is farthest away from $v$ and let $e_1$ be
    the edge in $T$ adjacent to $u$.
    We now consider a few cases:
    \begin{itemize}
      \item case (i). Assume $u$ is adjacent to another edge $e_2$. Here, by the above assumption, $u$ is not adjacent to any other edge (in $G$).
    Now, we remove $e_1$ and $e_2$ as a 2-path, leaving us with an isolated vertex $u$ while the rest of the graph
    is still connected (through $T$) and thus our claim follows by induction.
    Thus assume, case (i) does not hold, which means $u$ has degree 1 in $G$. Consider the parent $w$ of $u$. 
      \item case (ii). Assume $w$ is adjacent to an edge $e_2$ that is not in $T$.
        In this case, we can remove $e_1$ and $e_2$ as a 2-path. This leaves $u$ as an isolated vertex and the rest of the graph
        will still stay connected and thus the claim follows by induction.

      \item case (iii). Assume $w$ is not adjacent to any edge that is not in $T$. Let $e_1$ be the edge that connects
        $w$ to $u$. This case has three additional subcases. 
        \begin{itemize}
          \item case (iii)a. Assume $u$ is the only child of $w$ and $w=v$. In this case, we are done, since we have only one edge
            left which is adjacent to $v$. 
          \item case (iii)b. Assume $u$ is the only child of $w$ and $w \not =v$. In this case, $w$ must have a parent and thus let
            $e_2$ be the edge that connects $w$ to its parent. 
            We can remove $e_1$ and $e_2$ as a 2-path. This leaves both $w$ and $u$ as two isolated vertices but the rest of the graph
            still stays connected and thus our claim follows by induction. 
          \item case (iii)c. Assume $w$ has another child $u'$ and let $e_2$ be the edge that connects $w$ to $u'$. 
            Here, we can remove $e_1$ and $e_2$ as a 2-path and this leaves both $u$ and $u'$ as two isolated vertices but the rest of the graph
            still stays connected and thus our claim follows by induction. 
        \end{itemize}
    \end{itemize}
\end{proof}

We now prove the one graph theoretic lemma that we will later use.
\begin{lemma}\label{lem:maingraph}
    Let $G$ be a connected graph that contains an even cycle. Let $\overline{G}$ be the line graph of
    $G$.
    Then, either $\overline{G}$ has a perfect matching or its vertices can be partitioned into a matching
    and a triangle.
    Consequently, 
    we can partition the edges of $G$ into a number of 2-paths  or into a number of 2-paths and 
    a claw.
\end{lemma}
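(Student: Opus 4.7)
The plan is to split on the parity of $|E(G)|$. When $|E(G)|$ is even, Theorem~\ref{thm:evenline} applied to $G$ directly yields a perfect matching in $\overline{G}$, equivalently a partition of $E(G)$ into 2-paths, giving the first alternative in the lemma. So the interesting case is $|E(G)|$ odd.

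Assuming $|E(G)|$ is odd, I would let $C$ be an even cycle in $G$. Since $|E(C)|$ is even but $|E(G)|$ is odd, $G$ must contain an edge outside $C$, and combined with connectedness this forces some vertex $v \in V(C)$ to have $\deg_G(v) \ge 3$: otherwise every vertex of $C$ has degree~$2$, so $C$ would form its own connected component, contradicting the existence of an edge outside~$C$. Let $e_1, e_2$ be the two edges of $C$ incident to $v$. My target will be to find a third edge $e_3$ at $v$ so that $K = \{e_1, e_2, e_3\}$, which is a claw at $v$, has the property that every connected component of $H := G \setminus K$ has an even number of edges. Once such a $K$ is in hand, Theorem~\ref{thm:evenline} applied separately to each component of $H$ partitions $E(H)$ into 2-paths; combined with the claw $K$, this yields the desired partition of $E(G)$ into 2-paths and one claw.

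To reduce this target to a clean parity condition, I will denote the connected components of $G - v$ by $H_1, \ldots, H_p$, and set $a_i$ to be the number of edges from $v$ into $H_i$ and $b_i = |E(H_i)|$. Removing the three edges of $K$ removes $t_i$ edges from $v$ to $H_i$, with $\sum_i t_i = 3$ and $0 \le t_i \le a_i$. In $H = G \setminus K$, each $H_i$ with $t_i = a_i$ becomes a separate component with $b_i$ edges, while the remaining $H_j$'s (those with $t_j < a_j$) stay joined to $v$ in a single big component. A short parity calculation, using that $|E(G)| - 3$ is even, will show that every component of $H$ has an even edge count precisely when $b_i$ is even whenever $t_i = a_i$; equivalently, a good distribution $(t_1, \ldots, t_p)$ exists provided we can choose it with $t_i \le a_i - 1$ whenever $b_i$ is odd.

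The main obstacle, as I see it, will be to guarantee that such a distribution always exists. Parity counting forces the number of components of $G - v$ with odd $b_i$ to have parity opposite to that of $\deg_G(v)$, and a feasible distribution exists exactly when this count is at most $\deg_G(v) - 3$. If the initially chosen $v$ on $C$ fails this bound, I plan either to pass to a different high-degree vertex of $G$, or---exploiting that $|E(C)|$ is even and $C$ itself admits a 2-path decomposition---to first peel off some 2-paths supported on $C$ and then apply the argument to the residual graph. Verifying that one of these options always succeeds, using the flexibility granted by the even cycle, is the delicate combinatorial step where the even-cycle hypothesis is essential.
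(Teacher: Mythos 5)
Your reduction to the odd-edge case and your choice of a degree-$\geq 3$ vertex $v$ on $C$ as the center of the claw are both on the right track, and match the paper's general shape. But you then try to choose a single edge $e_3$ at $v$ so that deleting the claw $K=\{e_1,e_2,e_3\}$ leaves every component of $G\setminus K$ with an even number of edges, and you openly leave open the crux: \emph{why does a suitable $(v,e_3)$ always exist?} Your two proposed fallbacks do not close this. ``Pass to a different high-degree vertex'' is not accompanied by an argument that some vertex must work, and ``peel off 2-paths supported on $C$'' destroys the even cycle $C$ that your whole argument hinges on, so you cannot simply recurse. Concretely, with $v$ fixed and both cycle edges at $v$ required in the claw, the component $H_1$ of $G-v$ containing $C-v$ always loses both of $e_1,e_2$; if $v$ has exactly two incidences into $H_1$ (e.g.\ no chord of $C$ at $v$) and $H_1$ has an odd number of edges, then \emph{no} choice of $e_3$ fixes this at $v$ --- you have no control over this situation without a structural tool beyond parity counting.

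The paper closes exactly this gap using Theorem~\ref{thm:line} (a Sumner-style statement: a connected graph with an odd number of edges can be partitioned into 2-paths plus one edge incident to any prescribed vertex). It first removes the edges of $C$; some connected component $G_i$ of the remainder has an odd number of edges (since $|E(C)|$ is even and $|E(G)|$ is odd); it picks a vertex $v$ of $G_i$ that lies on $C$, applies Theorem~\ref{thm:line} to $G_i$ with that prescribed $v$ to obtain the 2-paths plus a leftover edge $e$ at $v$, and sets the claw to $\{e,e',e''\}$ with $e',e''$ the $C$-edges at $v$. After deleting that claw \emph{together with} the 2-paths from $G_i\setminus\{e\}$, what remains is connected (glued along $C\setminus\{e',e''\}$) and has an even number of edges, so Theorem~\ref{thm:evenline} finishes. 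This is strictly more information than ``$G\setminus K$ has all-even components'': it comes with an explicit 2-path decomposition of the removed part and identifies a $v$ for which the construction is guaranteed to work. If you want to keep your cleaner-looking framing (pure claw removal), you would still need something equivalent to Theorem~\ref{thm:line} to certify that a good $e_3$ exists; as written, your parity bookkeeping alone does not get there.

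Minor points that are correct: $|E(G)|$ even is exactly Theorem~\ref{thm:evenline}; $C$ cannot be $2$-regular in $G$ since then it would be a whole component, so some $v\in V(C)$ has degree $\geq 3$; and ``$G\setminus K$ has a 2-path decomposition iff every component of $G\setminus K$ has an even number of edges'' is the right reformulation via Theorem~\ref{thm:evenline}.
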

\begin{proof}
    Observe that we only need to consider the case when $G$ has an odd number of edges as the other case
    is already covered by Theorem~\ref{thm:evenline}.

    Let $C$ be an even cycle in $G$, and let
    $G_1, \cdots, G_m$ be the connected components obtained after removing the edges of $C$.
    Since $G$ has an odd number of edges and $C$ has an even number of edges, there is
    a component $G_i$ with an odd number of edges. Let $v$ be a vertex of $G_i$
    that lies on $C$; such a vertex exists because $G$ is connected, so every component
    resulting from the removal of the edges of $C$ must contain at least one vertex from $C$.
    By Theorem~\ref{thm:line}, we can partition the edges of $G_i$ into a number of 2-paths 
    and at most one isolated edge $e$ adjacent to~$v$. Let $e'$ and $e''$ be the edges of $C$
    incident to $v$. Note that $\{e,e',e''\}$ is a claw. Now consider the graph $G$ again
    (including the cycle~$C$), and remove the claw $\{e,e',e''\}$ and the
    2-paths comprising $G_i\setminus \{e\}$ from $G$.
    The resulting graph~$G^*$ consist of the path $C\setminus  \{e',e''\}$
    plus various components $G_j$ attached to this path, and so $G^*$ is connected.
    Moreover, $G^*$ has an even number of edges, since it was obtained by removing an odd number of edges from $G$.
    Hence, the edges from $G^*$  can be decomposed into a number of 2-paths by Theorem~\ref{thm:evenline},
    thus finishing the proof.
\end{proof}

%\subsection{Bidirectional sweep to find ``bottleneck'' sites.}
%\input{sweep}

\fi

\ifinmain
\subsection{Bidirectional sweep to find ``bottleneck'' sites.}
\else
\subsection{Bidirectional sweep to find ``bottleneck'' sites.}\label{app:sweep}
\fi
Consider an optimal patrol schedule with latency $L$, and consider a time interval~$\I:=[t_0, t_0 + L]$
for an arbitrary $t_0 > 2L$. 
By our assumptions, every site is visited at least once within this time interval. 
We assign a time interval $I_i\subseteq \I$ to every robot~$r_i$. 
Initially $I_i = \I$. 
To identify the important sites that are visited by the robots, using a process that
we will describe shortly, we will \textit{shrink} each $I_i$.
Shrinking is done by moving the left and right endpoints of $I_i$ ``inward'' at the same speed. 
This will be done  in multiple stages and at the end of each stage, an endpoint of some intervals
could become \textit{fixed}; a fixed endpoint does not move anymore during the following stages.
When both endpoints of $I_i$ are fixed, we have found the final shrunken interval for $r_i$. 
Initially, all the endpoints are \textit{unfixed}.
For an interval $I_i = [t_i, t'_i]$, shrinking $I_i$ by some value $\varepsilon \ge 0$ yields the interval
$[t_i + \varepsilon \varphi_1, t'_i - \varepsilon \varphi_2]$ where $\varphi_1$ (resp. $\varphi_2$) is $1$ if
the left (resp. right) endpoint of $I_i$ is unfixed, otherwise it is 0. 
Note that an interval $[x,y]$ with $x>y$ is considered \emph{empty} (i.e., an empty set) and thus
shrinking an interval by a large enough value  will yield an empty interval
(assuming at least one endpoint is unfixed).

\mypara{The invariant.}
We maintain the invariant that at the beginning of each stage of the shrinking process, 
all the sites are visited during the shrunken intervals, i.e., 
for every site $s \in P$, there exists a robot $r_i$, and a value $t \in I_i$
such that $f_i(t) = s$.
Observe that 
the invariant holds at the beginning of the first stage of the shrinking process.

\mypara{The shrinking process.}
Consider the $j$-th stage of the shrinking process. 
Let $\varepsilon_j \ge 0$ be the largest (supremum) number such that shrinking
all the intervals by $\varepsilon_j$ respects the invariant. 
If $\varepsilon_j$ is unbounded, 
%    \mdb{I'm confused. Should this be: If $\varepsilon_j=0$?} 
%    \haotian{In my understanding, if $\varepsilon_j$ is unbounded, it means that all the unfixed intervals can be shrunken to $0$. Thus, this is the last stage and the corresponding robots with empty intervals are idle. After this stage, there is $\varepsilon_j = 0$. It means that all the unfixed intervals cannot be shrunken and the shrinking process is finished.}
%    \mdb{I see what you mean, I think. Somehow it seems strange to me to have $\varepsilon_j$ that is larger than $(t'_i+t_i)/2$,  since then the left and right endpoint of $[t_i + \varepsilon \varphi_1, t'_i - \varepsilon \varphi_2]$ "switch positions". Then the intervals become empty, which indeed seems to be fine in this case. I guess we should be more clear about what happens when we shrink  $[t_i + \varepsilon \varphi_1, t'_i - \varepsilon \varphi_2]$ with some $\varepsilon_j \geq (t'_i+t_i)/2$.}
%         \peyman{Added: ``an interval $[x,y]$ with $x>y$ is considered empty (i.e., an empty set) $\cdots$'' right before ``The Invariant''. And with this definition,  $\varepsilon_j$ can be unbounded.}
then this is the last stage; 
every interval $I_i$ that has an unfixed endpoint is reduced to an \emph{empty interval} and we are
done with shrinking, meaning, the shrinking process has yielded some $k'\le k$ intervals with both
endpoints fixed, and $k-k'$ empty intervals. 
Otherwise, $\varepsilon_j$  is bounded and well-defined as
the invariant holds for $\varepsilon_j = 0$.
With a slight abuse of the notation, let $I_1, \cdots, I_k$ be the intervals shrunken by $\varepsilon_j$. 
\ifinmain See Figure~\ref{fig:shrink}(left).  \else See Figure~\ref{fig:shrinkapp} (left).  \fi

Since $\varepsilon_j$ is the largest value that respects our invariant, it follows
that there must be at least one interval $I_{i_j}$ and at least one of its endpoints $t_j$ such that
at time $t_j$,  the robot  $r_{i_j}$ visited the site $f_{i_j}(t_j)$  
and this site is not visited by any other robot in the interior of their time intervals.
Now this endpoint of $I_{i_j}$ is marked as fixed and we continue to the next stage. 

For a fixed endpoint $A$, let $\ell(A)$ be the distance of $A$ to the corresponding boundary of the unshrunk interval.
More precisely, if $A$ is a left endpoint then the position of $A$ on the time axis is
$t_0 + \ell(A)$, and if $A$ is a right endpoint then this position is  $t_0 + L - \ell(A)$.
With our notation, if $A$ was discovered at stage $j$, then $\ell(A) = \varepsilon_1 + \cdots + \varepsilon_j$.

\begin{figure}[h]
  \centering
  \includegraphics[scale=0.65]{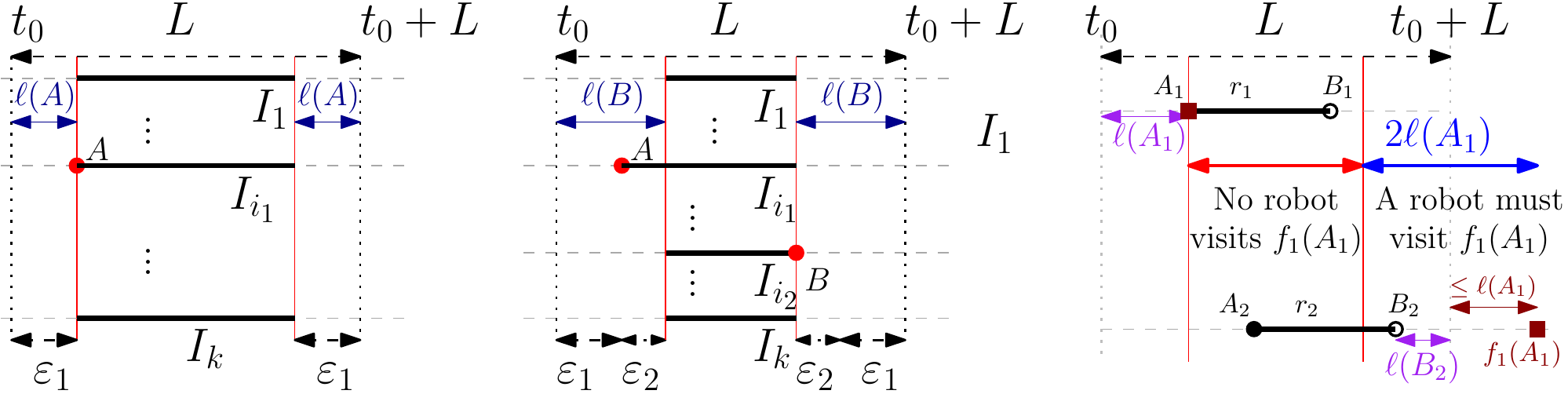}
  \caption{(left) $A$ is fixed at stage~1. (middle) $B$ is fixed at stage~2. 
  (right) By the property of the shrinking process, the site visited at $A_1$ is not visited by
  any robot within the red time interval but since the site has latency at most $L$, 
  it must be visited by some robot in the blue interval.
%  \mdb{I find the use of $A$ and $A_1$ (which refer to different things?) confusing.} 
%  \haotian{My feeling is that they refer to the same thing, as the endpoint in the interval, which is a time slot. $A_i$ and $B_i$ represent the left and right endpoint of the robot $r_i$'s interval.  The left and middle plots show the single stage. The right plot show two stages, which is described in Section 4.2.}
%  \mdb{Yes. What I meant is that the $A$ is not the same as $A_1$, although indeed they refer to similar things. And somehow I would exoect $A$, which is fuxed at stage~1, to be still visible in the middle picture which shows stage 2.}
%      \peyman{Good point on $A$. Fixed. $A_1, \cdots$ only refer to the sitaution in the right figure.Is it better now?}
     }
\ifinmain
  \label{fig:shrink}
\else
  \label{fig:shrinkapp}
\fi
\end{figure}

\subsection{Patrol graph, shortcut graph, and bag graph}
\mypara{Shortcutting idea.}
\ifinmain Figure~\ref{fig:shrink} (right) \else Figure~\ref{fig:shrinkapp}(right) \fi
explains the crucial property of our shrinking process:
Robot $r_1$ visits the site $p=f_1(A_1)$ at time $A_1$ (which corresponds to the left endpoint of the interval~$I_1$) but to keep the latency of $p$ at most $L$, $p$
must be visited by another robot, 
say $r_2$, sometime in the interval $[t_0+L-\ell(A_1), t_0+L+\ell(A_1)]$,
shown in blue in the figure. 
For the moment, assume the right endpoint of the interval
of~$r_2$ is a fixed point $B_2$ and $r_2$ visits a site $p'=f_2(B_2)$ at time $B_2$.
This implies that the distance between $p$ and $p'$ is at most $\ell(A_1) + \ell(B_2)$.
Now observe that we can view this as a ``shortcut'' between endpoints $p$ and $p'$: for example,
$r_2$ can follow its own route from $A_2$ to $B_2$, then take the shortcut to $A_1$, and then follow
$r_1$'s route to $B_1$.
The extra cost of taking the shortcut, which is $\ell(A_1) + \ell(B_2)$,  can also be charged to
the two ``shrunken'' pieces of the two intervals (the purple intervals in the picture).
Our main challenge is to show that these shortcuts can be used to create a
cyclic solution with only a small increase in the latency.

To do that, we will define a number of graphs associated with the shrunken intervals. 
We define a \textit{patrol graph}  $\gP$, a
\textit{bag graph} $\gB$ and a \textit{shortcut graph} $\gS$.
The first two are multigraphs, whereas the shortcut graph is a simple graph. 
\ifinmain
For examples see Figure~\ref{fig:graphs} on page~\pageref{fig:graphs} and its
discussion on page~\pageref{par:example}.
\else
We will describe particular examples of these graphs below.
\fi

We start with the bag graph and the shortcut graph. 
We first shrink the intervals as described previously.
To define these graphs, consider $2k$ conceptual \textit{bags}, two for each
interval (including the empty intervals). More precisely, for
each interval we have one \emph{left bag} and one \emph{right bag}.
The bags are the vertices of the bag graph $\gB$ (Figure~\ref{fig:graphs}(b)).
The vertices of the shortcut  graph $\gS$ are the endpoints of the non-empty intervals.
To define the edges of the two graphs, we use \textit{placements}.  We will
present the details below but basically, every endpoint of a non-empty interval
will be placed in two bags,  one in some left bag  $\beta_1$ and another time
in some right bag $\beta_2$. 
After this placement, we add an edge in the bag graph between $\beta_1$ and $\beta_2$.  
Once all the endpoints have been placed, we add edges in the shortcut  graph
between every two endpoints that have been placed in the same bag (Figure~\ref{fig:graphs}(c)).

\ifinmain
    \begin{figure}[p]
      \centering
      \includegraphics[scale=0.7]{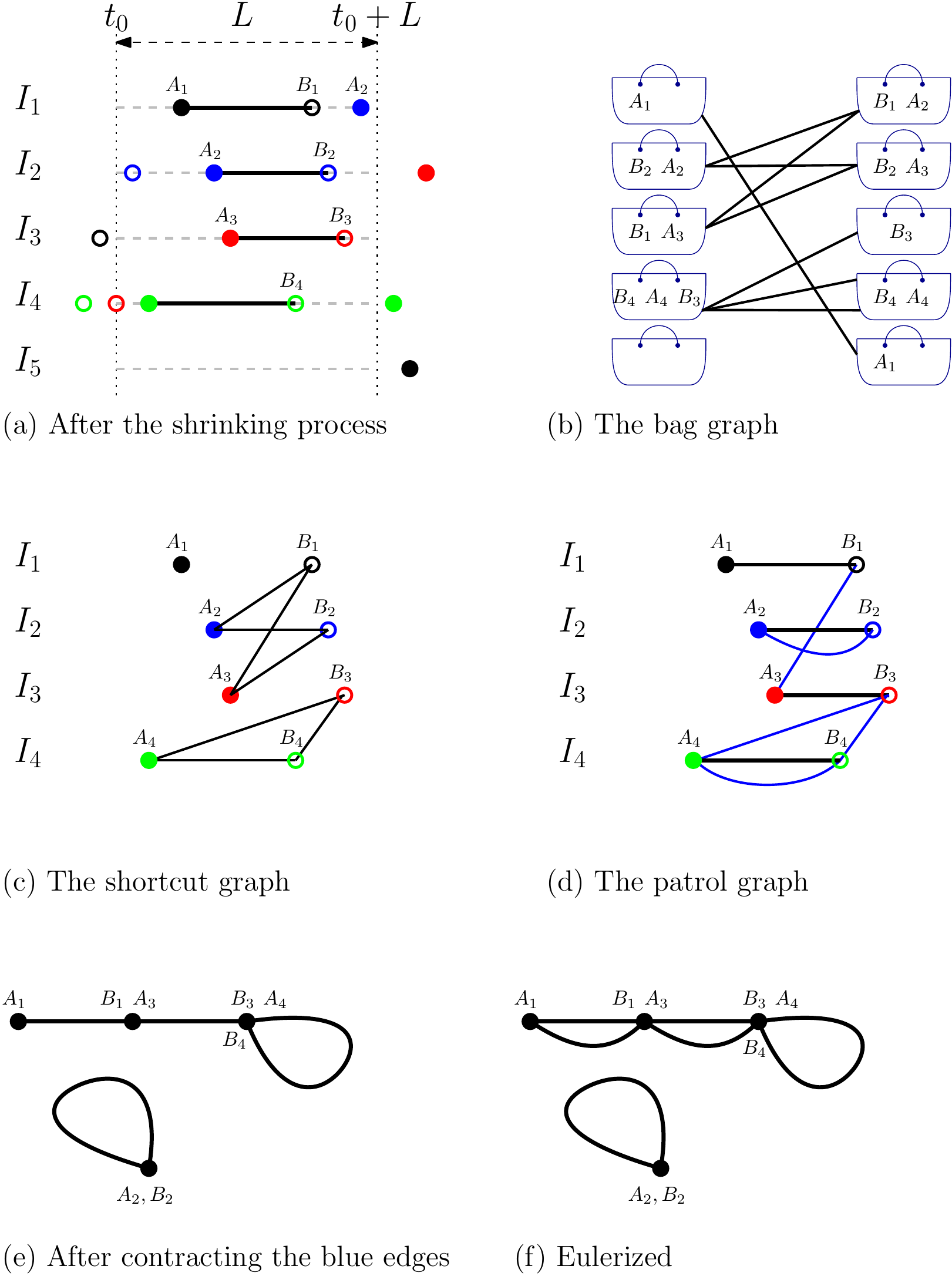}
      \caption{Example of bag, shortcut and patrol graph}
      \label{fig:graphs}
    \end{figure}
    \subparagraph{An example of a bag and shortcut graphs.}\label{par:example}
    An example is shown in Figure~\ref{fig:graphs}.
    In part (a), 
    we have four non-empty intervals $I_1 = [A_1,B_1],I_2 = [A_2,B_2],I_3 = [A_3,B_3], I_4 = [A_4,B_4]$
    and an empty interval $I_5$ (we will later explain the second appearance of
    each endpoint in this picture and for now the reader can ignore the
    ``floating'' endpoints).
    An example of a bag graph is shown in Figure~\ref{fig:graphs}(b):
    Every endpoint is placed twice (once in some left bag and one is some right bag). 
    E.g., $A_1$ is placed in the top-left bag and the
    bottom-right bag and thus the two bags are connected in the bag graph.
    Similarly, $B_1$ is placed in two
    bags, once at the top-right bag and the other time at the mid-left bag.  In
    part (c) of the figure, one can see the shortcut graph in which two endpoints
    are connected if and only if they are placed in the same bag. Also, this is a
    simple graph and despite the fact that $A_4$ and $B_4$ are placed together in
    two different bags, they are still connected once in the shortcut graph.  
\fi

Initially, all the bags are empty.
For every non-empty interval $I_1 = [A_1,B_1]$, 
we place the left endpoint of $I_1$ in its own left bag and the right endpoint of $I_1$ in its own right bag.
This is the first placement.
For the second placement, consider a non-empty interval $I_{i_1}$ and its left endpoint $A$.
The position of $A$ on the time interval is $t = t_0 + \ell(A)$. 
See 
\ifinmain Figure~\ref{fig:shrink}(right). \else Figure~\ref{fig:shrinkapp}(right). \fi
By our assumptions, the robot $r_{i_1}$ visits the site $p=f_{i_1}(t)$ at time $t$.
Consider the stage of our shrinking process when $A$ gets fixed.
For this to happen, the site $p$ cannot be visited by any robot
in the time interval $(t_0 +  \ell(A), t_0+L-\ell(A))$
(the red interval in 
\ifinmain Figure~\ref{fig:shrink}(right)), \else Figure~\ref{fig:shrinkapp}(right)), \fi
as otherwise, we could either shrink all the intervals
by an infinitesimal additional amount or some other endpoint would have been fixed.
On the other hand, this site has latency at most $L$,  so it must be visited by another robot
in the time interval $(t, t+L] = (t_0+\ell(A), t_0+\ell(A)+L]$.
This means that the robot $r_j$ that visits $p$ earliest in this interval must do so within
the time interval $[t_0 + L- \ell(A), t_0 + L + \ell(A)]$  (the blue interval in 
\ifinmain Figure~\ref{fig:shrink}(right)). \else Figure~\ref{fig:shrinkapp}(right)). \fi
Note that $r_j$ could be any of the robots, including $r_{i_1}$ itself.
We now place $A$ in the  right bag of $I_j$. 
% The rest is different between main and appendix.
\ifinmain
    A very similar strategy is applied to the right end point of $I_1$; for details see Section~\ref{app:cyclic-approx}
    where we also prove the following properties.
    \begin{restatable}{lemma}{oblist}\label{ob:list}
        The bag graph~$\gB$ and the shortcut  graph~$\gS$ have the following properties.
        \begin{enumerate}[(a)]
            \item $\gB$ is a bipartite graph.
            \item $\gS$ is isomorphic to the line graph of $\gB$
            \item Let $\gB'$  be a connected component of $\gB$.
                If none of the vertices of $\gB'$ belong to empty-intervals, then
                the number of vertices of $\gB'$ is equal to its number of edges.
            \item Let $\gS'$  be a connected component of $\gS$.
                If $\gS'$ has a vertex $v$ of degree one, then it must be the
                case that $v$ corresponds to an endpoint of a non-empty interval
                that has been placed (alone) in a bag of an empty interval. 
        \end{enumerate}
    \end{restatable}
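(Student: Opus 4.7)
The plan is to handle the four parts in order: (a) and (b) will follow immediately from the definitions, (c) will be a short double count, and (d) will require a careful case analysis that I expect to be the main obstacle.

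For part (a), I will observe that every edge of $\gB$ is born from a single endpoint's pair of placements---one in a left bag and one in a right bag---so every edge joins the ``left-bag'' vertex class to the ``right-bag'' vertex class, which makes $\gB$ bipartite. For part (b), I will exhibit the natural bijection $\phi:V(\gS)\to E(\gB)$ that sends each endpoint $v$ to the edge $e_v$ between its two bags. Two endpoints $v,w$ share a bag in the placement process exactly when $e_v$ and $e_w$ share a vertex in $\gB$, so this bijection turns $\gS$ into the line graph of $\gB$. The one subtlety to handle is that two endpoints may share both of their bags, creating parallel edges in $\gB$ but only a single edge in the simple graph $\gS$, which matches the standard definition of the line graph of a multigraph.

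For part (c), I will let $\gB'$ be a connected component containing no empty-interval vertex and write $V(\gB')=L\cup R$ according to left and right bags. Each $L_I\in L$ is the left bag of a non-empty interval $I$, so it contains $A_I$'s first placement; since $\gB'$ is a full connected component, $A_I$'s second placement also lies in $V(\gB')$, placing the edge $e_{A_I}$ into $E(\gB')$. Symmetrically each $R_J\in R$ contributes the edge $e_{B_J}$. Every edge of $\gB'$ is $e_v$ for a unique endpoint $v$ of a non-empty interval, and each such $v$ is counted in exactly one of these two groups, giving $|E(\gB')|=|L|+|R|=|V(\gB')|$.

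Part (d) is the delicate step. Using the line-graph correspondence from (b), a degree-one vertex $v\in \gS$ corresponds to an edge $e_v=(u_1,u_2)$ in $\gB$ with exactly one distinct neighbor in $E(\gB)$. A short case analysis on the degrees $\deg_\gB(u_1),\deg_\gB(u_2)$ and on the number of edges parallel to $e_v$ will pinpoint a bag $u_i$ whose only placement is $v$'s, so $v$ is placed alone in $u_i$. The hard part will be to show that this $u_i$ is necessarily a bag of an empty interval: a bag of a non-empty interval $I$ automatically holds its own first placement, so the only way $L_I$ can contain just $v$ is if $v=A_I$ and no right endpoint's second placement lands in $L_I$. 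I plan to rule this out by invoking the shrinking invariant and the mirror-visit rule used to define the second placements, arguing that such a hypothetical configuration would force an additional placement into $u_i$ (hence an extra $\gS$-neighbor of $v$), contradicting $\deg_\gS(v)=1$. Pinning down exactly how the shrinking invariant produces that extra placement is where I expect to spend the most effort.
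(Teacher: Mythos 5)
Your treatment of parts (a)--(c) is correct and matches the paper's proof almost verbatim: (a) each endpoint's two placements land in one left bag and one right bag, so $\gB$ is bipartite; (b) the map sending each endpoint $v$ to the $\gB$-edge between its two bags is the desired isomorphism (and you are right that parallel edges in the multigraph $\gB$ collapse to a single $\gS$-edge, matching the fact that $\gS$ is defined as a simple graph); (c) charging each bag of $\gB'$ to the edge generated by its own interval's endpoint gives a bijection between $V(\gB')$ and $E(\gB')$, which is exactly the paper's counting argument dressed in slightly different language.

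Part (d) is where you and the paper diverge, and where your proposal is genuinely incomplete. You correctly isolate the delicate sub-case --- a vertex $v=A_I$ of degree one whose only empty bag is its own left bag $L_I$ (a bag of a \emph{non-empty} interval), with its second placement in a non-empty interval's bag --- and you observe that this configuration is combinatorially consistent with the bag-graph structure. Your plan is to rule it out by going back to the shrinking process and the ``mirror-visit'' rule, arguing that the dynamics would force an additional placement into $L_I$. You do not actually carry this out, and you explicitly say you expect it to be the hard part; as written this is a gap, not a proof. The paper, by contrast, argues purely combinatorially and does not invoke the shrinking invariant at all: it observes only that if $v$'s second placement lands in a non-empty interval's bag then that bag already contains that interval's own first-placement endpoint, so $v$ acquires at least one $\gS$-neighbor, and then concludes directly that a degree-one vertex must have been placed alone in an empty-interval bag. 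That argument is terse --- it really establishes the claim for degree zero, and the jump to degree one is not spelled out --- but it is a different route from yours: it never appeals to the schedule's continuity or the shrinking invariant. So even if you succeed in closing your gap via the scheduling argument, you would be proving (d) by a different mechanism than the paper uses, and you should first ask whether the extra dynamical machinery is actually needed or whether a sharper combinatorial reading of the second-placement rule (which is what the paper leans on) suffices for the way (d) is applied downstream.
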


    \mypara{The patrol graph.}
    The above lemma, combined with the graph theoretical tools that we outline in Appendix~\ref{app:graph}
    allows us to define the patrol graph $\gP$. 
    Here, we only give an outline, and for the full details see Appendix~\ref{app:cyclic-approx}.
    An example of a patrol graph is shown in Figure~\ref{fig:graphs}(d).
    Initially, the patrol graph, $\gP$, consists of $k'$ isolated black edges, one for each non-empty interval.
    Observe that both $\gP$ and the shortcut graph $\gS$ have the same vertex set (endpoints of the non-empty intervals).
    We add a subset of the edges of the shortcut graph to $\gP$.
    Let us consider an ``easy'' case to illustrate the main idea.

    \mypara{An easy case.}
    Assume $\gB$ is connected and that it has an even number of edges. 
    In this case, we can in fact prove that an optimal cyclic solution exists.
    Recall that $\gS$ is the line graph of $\gB$ and it is known (see Theorem~\ref{thm:evenline})
    that the line graph of a connected graph with even number of edges, has a perfect matching.
    Thus, we can find a perfect matching $M$ as a subset of edges of $\gS$. 
    Add  $M$ to $\gP$ as ``blue'' edges. Now, every vertex of $\gP$ is adjacent to
    a blue and a black edge and thus $\gP$ decomposes into a set of 
    ``bichromatic'' cycles, i.e., cycles with alternating black-blue edges.
    With a careful accounting argument, we can show that this indeed yields a cyclic solution without
    increasing the latency of any of the sites.
    We have already mentioned the main idea under the ``shortcutting idea'' paragraph, 
    at the beginning of the section.
    Specifically, we will use the following lemma.
    \begin{restatable}{lemma}{shortcut}\label{lem:shortcut}
      Consider two adjacent vertices $v$ and $w$ in the shortcut graph.
      This means that there are two non-empty intervals $I_1$ and $I_2$ such that $v$ corresponds to
      an endpoint $A$ of $I_1$ and $w$ corresponds to an end
      point $B$ of $I_2$ and $A$ and $B$ are placed in the same bag.
      Let $s_1$ be the site visited at $A$ during $I_1$ and $s_2$ be the site visited at $B$ on $I_2$.
      Then, we have $d(s_1, s_2) \le \ell(A) + \ell(B)$.
    \end{restatable}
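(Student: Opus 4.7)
The plan is to show that whenever two endpoints $A$ and $B$ share a bag, there is a single robot whose trajectory visits both associated sites $s_1$ and $s_2$ within a time span of length at most $\ell(A)+\ell(B)$; the bound on $d(s_1,s_2)$ then follows immediately from the unit-speed constraint on that robot.

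The first step is to extract a uniform ``visit-window'' description that holds for every placement rule. Concretely, I would verify that whenever an endpoint $X$ of a non-empty interval is placed into the \emph{right} bag of some interval $I_j$, the schedule $f_j$ of robot $r_j$ visits the site $s_X$ associated with $X$ at some time in
\[
W^{R}(X) \;:=\; [\,t_0+L-\ell(X),\; t_0+L+\ell(X)\,].
\]
There are only two scenarios: either $X = B_j$ and the placement is the initial one, in which case $X$ itself sits at time $t_0+L-\ell(X)\in W^{R}(X)$ and the robot trivially visits $s_X$ at that moment; or $X$ is a left endpoint placed by the second rule, in which case the construction explicitly provides a visit of $s_X$ by $r_j$ inside $W^{R}(X)$. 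A symmetric statement for the left bags, with window $W^{L}(X):=[\,t_0-\ell(X),\,t_0+\ell(X)\,]$, follows from the analogous backward-in-time argument (i.e., symmetrically, an initial placement of $A_j$ is at $t_0+\ell(X)\in W^{L}(X)$, while a second placement of a right endpoint yields a visit time inside $W^{L}(X)$).

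With this in hand, the lemma reduces to a short calculation. Assume without loss of generality that the shared bag is the right bag of some $I_j$ (the left-bag case is identical). The uniform description supplies times $\tau_A \in W^{R}(A)$ and $\tau_B \in W^{R}(B)$ with $f_j(\tau_A)=s_1$ and $f_j(\tau_B)=s_2$, so the unit-speed constraint on $f_j$ gives $d(s_1,s_2) \le |\tau_A-\tau_B|$. Both $W^{R}(A)$ and $W^{R}(B)$ are centered at $t_0+L$, so $|\tau_A-\tau_B|$ is maximized when $\tau_A$ and $\tau_B$ sit at opposite extremes of their respective windows, and this maximum equals $\ell(A)+\ell(B)$.

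The main obstacle is ensuring that the uniform window description genuinely covers every placement type, in particular the degenerate initial placements, where the ``visit window'' collapses to a single time value and one must confirm that this value lies inside the correct $W^{R}$ or $W^{L}$. Once this bookkeeping is done, the remainder is just the triangle inequality applied along the trajectory of the shared robot~$r_j$, and no further graph-theoretic or combinatorial machinery is required.
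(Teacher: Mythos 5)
Your proof is correct and takes essentially the same approach as the paper: identify the one robot $r_j$ whose shared bag certifies that $f_j$ visits both $s_1$ and $s_2$ within windows of radius $\ell(A)$ and $\ell(B)$ around a common center, then apply the unit-speed bound. The only difference is cosmetic: you unify the initial and second placements under a single ``visit-window'' statement, whereas the paper argues by an explicit case split on whether $A$ and $B$ are both left endpoints or of opposite types, but the windows, the shared robot, and the final calculation are identical.
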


    Black edges represent the routes of the robots, and blue edges are the shortcuts that connect one route
    to another. 
    So in this easy case, once the patrol graph has decomposed into bichromatic cycles, we turn 
    each cycle into one closed route (i.e., cycle) using the shortcuts. 
    All the robots that correspond to the black edges
    are placed evenly on this cycle.
    Since by our invariant all the sites are visited at some time on the black edges, it follows that the robots
    visit all the sites. 
    A careful accounting argument using the ``missing'' pieces $\ell(\cdot)$, then shows that the 
    latency does not increase at all.

    Unfortunately $\gB$ can have connected components with odd number of edges.
    Nonetheless, in all cases we can build a particular patrol graph, $\gP$, with the following
    properties.
    \begin{restatable}{lemma}{obpatrol}\label{ob:adjacency}
      The patrol graph $\gP$ consists of $k'$ pairwise non-adjacent black edges and a number of blue
      edges.  Any blue edge $(v,w)$ in $\gP$ corresponds to an edge in the shortcut graph $\gS$. 
      Furthermore, the set of blue edges
      can be decomposed into a matching and a number of triangles. In addition,
      any vertex of $\gP$ that is not adjacent to a blue edge can be charged to a
      bag of an empty interval. 
    \end{restatable}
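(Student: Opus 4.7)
The plan is to build $\gP$ one connected component of $\gB$ at a time: for each connected component $\gB'$ of the bag graph, let $\gS'$ be the corresponding component of $\gS$ (which, by Lemma~\ref{ob:list}(b), is exactly the line graph of $\gB'$), and I will decide which edges of $\gS'$ to insert into $\gP$ as blue edges. I plan to split into three cases based on the parity of $|E(\gB')|$ and the presence of a cycle in $\gB'$.

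First, if $|E(\gB')|$ is even, I would apply Theorem~\ref{thm:evenline} to the connected graph $\gB'$ directly and extract a perfect matching of $\gS'$; adding those matching edges as blue edges leaves every vertex of $\gS'$ covered. Second, if $|E(\gB')|$ is odd but $\gB'$ contains a cycle, then that cycle must be even by the bipartiteness of $\gB$ (Lemma~\ref{ob:list}(a)), so Lemma~\ref{lem:maingraph} decomposes the edges of $\gB'$ into 2-paths plus one claw; in $\gS'$ this is a matching together with a single triangle, which I add as blue edges and again all vertices are covered. Third, if $|E(\gB')|$ is odd and $\gB'$ is a tree, then by the contrapositive of Lemma~\ref{ob:list}(c) (since a tree has $|V|=|E|+1\neq |E|$) the tree must contain at least one empty-interval bag vertex $v$; I would then invoke Theorem~\ref{thm:line} with that choice of $v$, partitioning the edges of $\gB'$ into 2-paths and one leftover edge incident to $v$. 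The line-graph counterpart is a matching covering all vertices of $\gS'$ except one, the uncovered vertex being the endpoint of a non-empty interval corresponding to the leftover edge; since that edge is incident to the empty bag $v$, this uncovered vertex is an endpoint placed in $v$ and can therefore be charged to it.

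Aggregating over all components of $\gB$, the blue-edge set decomposes globally into a matching plus one triangle per component of the second type, which matches the statement of the lemma; the only vertices of $\gP$ not adjacent to a blue edge come from tree-components of odd edge count and are each charged to some empty-interval bag. The original $k'$ black edges are untouched by the construction, so they remain pairwise non-adjacent. I expect the main obstacle to be the third case: one must be able to steer the ``leftover'' edge of Theorem~\ref{thm:line} toward an empty-interval bag so that the uncovered vertex is chargeable. This works precisely because Lemma~\ref{ob:list}(c) forces any odd-edge tree-component of $\gB$ to contain at least one empty-interval vertex---a component free of empty-interval vertices has $|V|=|E|$ and thus cannot be a tree. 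The remaining bookkeeping---that all blue edges are genuinely edges of $\gS$, that matchings from different components trivially combine into a matching, and that the single uncovered vertex has degree one in $\gS'$ consistent with Lemma~\ref{ob:list}(d)---should be routine.
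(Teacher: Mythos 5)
Your proof is correct and follows the same component-by-component strategy as the paper's (implicit) construction, using Theorem~\ref{thm:evenline} for even-edge components, Lemma~\ref{lem:maingraph} (via bipartiteness of $\gB$) for components with a cycle, and Theorem~\ref{thm:line} for the remaining tree components, with Lemma~\ref{ob:list} supplying the charging of uncovered vertices. The only (harmless) deviation is organizational: the paper splits the odd-edge-count subcase on whether $\gB'$ contains an empty-interval bag (its cases~(ii) vs.~(iii)), while you split on whether $\gB'$ contains a cycle, so in the overlap (odd edges, a cycle, \emph{and} an empty bag present) you apply Lemma~\ref{lem:maingraph} rather than Theorem~\ref{thm:line}; both yield valid decompositions, and you make explicit---which the paper only implies---that Theorem~\ref{thm:line} must be invoked with $v$ chosen to be the empty-interval bag so the leftover uncovered endpoint is placed in $v$ and hence chargeable.
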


    The idea covered in the above ``easy case'' works because it covers the black edges of $\gP$ with
    bichromatic edge-disjoint cycles and each cycle becomes a cyclic route.
    Unfortunately, in general $\gP$ might not have the structure that would allow us to do this.
    Here, we only outline the steps we need to overcome this:
    we consider each connected component, $\gP_i$, of $\gP$. 
    We first contract blue edges of $\gP_i$ to obtain a contracted patrol graph, $\gP_i^c$, 
    (Figure~\ref{fig:graphs}(e)) then
    we Eulerize it (Figure~\ref{fig:graphs}(f)), meaning, we duplicate a number of black edges such that the resulting graph is Eulerian.
    \begin{figure}[H]
      \centering
      \includegraphics[scale=0.7]{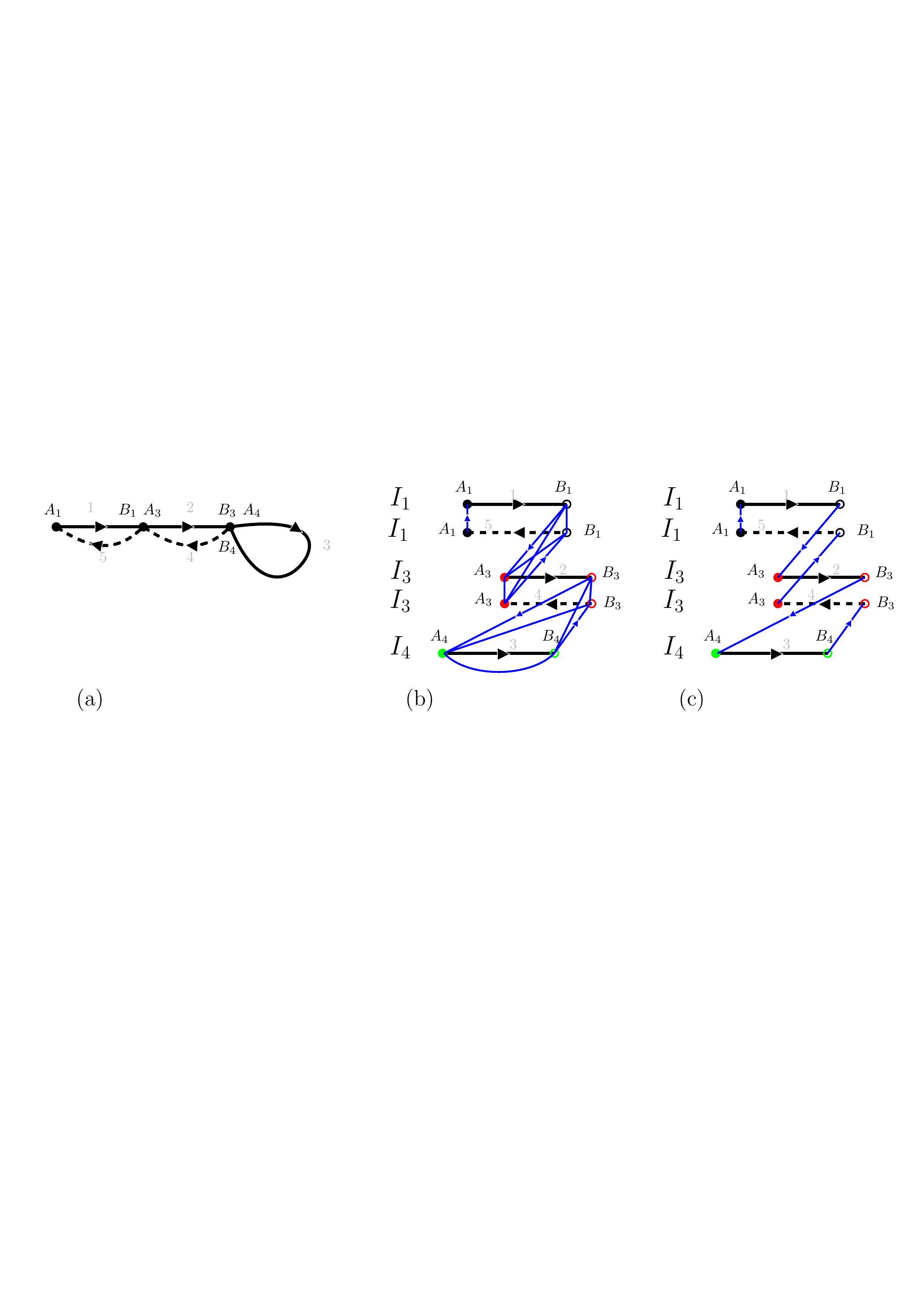}
      \caption{(a)  An Eularized contracted patrol graph (ECPG). Duplicated edges are drawn
        with dashed lines. The edges are directed and numbered according to an Euler tour.
      (b) The same ECPG after ``uncontracting'' the blue edges.
      The duplicated routes (i.e., the routes that will be traversed twice) are
      shown with dashed lines. 
      The corresponding Euler tour is marked and numbered.
      The shortcuts are taken in the correct direction.
      (c) The bichromatic cycle gives us a cyclic route. 
      It shows how the robots can travel along it. 
    }
      \label{fig:graphs-dup}
    \end{figure}
    This yields us an Eularized contracted patrol graph, $\gP_i^{{\scriptscriptstyle Ec}}$ (Figure~\ref{fig:graphs-dup}(a)).
    Next, we put the contracted blue edges back in $\gP_i^{{\scriptscriptstyle Ec}}$ which gives us the
    final patrol graph (Figure~\ref{fig:graphs-dup}(b)).
    In this final graph, we can show that we can cover the black edges using bichromatic edge disjoint
    cycles where each connected component of the final graph turns into one cycle (Figure~\ref{fig:graphs-dup}(c)); 
    this yields us a cyclic solution. 
    However, the duplicated black edges represent routes of robots that need to be traversed
    twice to obtain the cyclic solution. 
    This leads us to the final challenge: how to allocate the robots to the resulting cycles to minimize the
    latency.
    With some careful accounting and considering a few cases, we can show that this can be done in such a way
    that the resulting cyclic solution has latency at most $2L(1-1/k)$. We do this in Appendix~\ref{app:cyclic-approx}, proving Theorem~\ref{thm:opt-cyclic}.

\else

    A very similar strategy is applied to the right endpoint $B$ of $I_{i_1}$; 
    the location of $B$ on the time axis is $t' = t_0 + L - \ell(B)$, 
    the site $p'=f_{i_1}(t')$ cannot be visited by any robot within the
    interval $(t_0+\ell(B),t_0 + L - \ell(B))$ whereas it should be visited by some 
    robot in the region $[t'-L, t') = [t_0 - \ell(B), t')$ and thus
    it should visited by some robot in the time interval $[t_0 - \ell(B), t_0 + \ell(B)]$.
    Take $r_m$ to be the robot that visits $p'$ the latest in this time interval and then 
    the right endpoint of $I_{i_1}$ is placed in the left bag of $I_m$.
    This concludes the second placement of the endpoints of the non-empty intervals.

    Thus, for the sites $p$ and $p'$ defined above, we consider visits of them by the robots; intuitively, we consider once when are visited on the ``left'' side of the interval $\I$
    and another time on the ``right''.
    In Figure~\ref{fig:graphs}(a), the two visits of the sites are denoted by symbols of the same
    shape and color.
    E.g., $r_5$ is the first robot that visits the site
    $f_1(A_1)$ (colloquially speaking, the left endpoint of $I_1$) 
    during the interval $[t_0 + L- \ell(A_1), t_0 + L + \ell(A_1)]$.
%In particular, for $p$ (and similarly for $p’$) we will consider the visit of $p$ at an interval endpoint plus one more visit. If the endpoint is some right endpoint, $A$, then this other visit will be the last visit ``to the left of $A$’’, and if the endpoint is some right endpoint, $B$, then it will be the first visit ``to the right of~$B$.’’
    We now prove the following crucial lemma. 
    \oblist*
    \begin{proof}
    \begin{enumerate}[(a)]
    \item Every endpoint of a non-empty interval is placed
          twice, once in a left bag and once in a right bag. Hence,
          all edges of the bag graph are between left bags and 
          right bags. 
    \item Observe that by the definition of the two graphs, an edge $e$ in the bag graph between bags $\beta$ and $\beta'$ is formed because of the
        placement of an endpoint $p$ (of a non-empty interval) in the two bags. 
        Recall that $p$ is a vertex in the shortcut  graph $\gS$.
        We claim the shortcut  graph $\gS$ is isomorphic to the line graph of $\gB$ using the bijection that maps $e$ (a vertex in $\overline{\gB}$)
        to the vertex $p$ in $\gS$. 
        We now show that this is an isomorphism.

        Consider two  edges $e_1$ and $e_2$ in the bag graph in which $e_i$ is between bags $\beta_i$  and $\beta'_i$, and it is formed because of the placement of 
        an endpoint $p_i$ of a non-empty interval in the corresponding bags, for $i=1,2$.
        We consider two cases.
        First, assume, $e_1$ and $e_2$ are adjacent, and so w.l.o.g, assume $\beta_1=\beta_2$.
        For this to happen, $p_1$ and $p_2$ must be placed in the same bag $\beta_1 = \beta_2$.
        Thus, by the definition of the shortcut  graph, they are connected in $\gS$.
        The second case is when  $e_1$ and $e_2$ are not adjacent which implies all the bags $\beta_1, \beta'_1, \beta_2$, and $\beta'_2$ are
        distinct. 
        In this case observe that $p_1$ and $p_2$ are not placed in the same bag ever, since each of them is placed only twice, 
        and that no other bag can contain them.
        As a result, they are not connected with an edge in the shortcut  graph.
    \item Let $\beta_1, \dots, \beta_r$ be the vertices of $\gB'$.
        By assumption, none of the bags belong to an empty interval.
        However, recall that for a bag $\beta_i$, $1 \le i \le r$, the endpoint of its corresponding non-empty interval is placed in its own bag $\beta_i$, and then once more in some other bag
        $\beta'$.
        But since $\beta_1, \dots, \beta_r$ form a connected component, it must be the $\beta'$ should be one of the bags $\beta_1, \dots, \beta_r$, other than $\beta_i$. 
        As a result, we can charge the edge between $\beta'$ and $\beta_i$ to $\beta_i$ and it is clear that every bag will be charged exactly once, meaning,
        there would be exactly $r$ edges between the vertices $\beta_1, \dots, \beta_r$.
    \item Observe that if during the second placement, the left (resp. right) endpoint $p_1$ of a non-empty interval is placed in the right (resp. left) bag $\beta$ of a non-empty interval,
        then by our first placement, $\beta$ contains the right (resp. left) endpoint $p_2$ of some interval but crucially, $p_1 \not = p_2$ and thus $p_1$ will be connected to at least
        one other endpoint in the shortcut  graph.
        As a result, the only possible way for $p_1$ to end up with degree one is that $p_1$ is placed in a bag of an empty interval, and also no other endpoint is placed in that bag.
    \end{enumerate}
    \end{proof}

    \mypara{The patrol graph.}
    We are now ready to define the patrol graph $\gP$ (an example is shown in Figure~\ref{fig:graphs}(d)).
    Recall that $k'$ is the number of non-empty intervals. 
    Initially, $\gP$ consists of $k'$ isolated edges, one for each non-empty interval.
    Let us color these initial edges black. 
    We now define a natural bijection $J$ between the vertices of $\gP$ and the vertices of $\gS$:
    a vertex $v$ of $\gS$ is an endpoint of a non-empty interval and $J(v)$ maps it to the
    vertex in $\gP$ that represents the same endpoint. 
    In a process that we will describe shortly, we will add a subset of edges of $\gS$ to $\gP$
    and color them blue.
    To add an edge $e=(u,v)$ of $\gS$ to $\gP$, we simply add an edge between $J(u)$ and
    $J(v)$. 
    To describe the set of blue edges that we add, 
    consider a connected component $\gS'$ of $\gS$ and let $\gB'$ be the connected component
    of $\gB$ where $\gS' = \overline{\gB'}$.
    We now add a number of blue edges, using the following cases:
    \begin{itemize}
        \item (case i) $\gB'$ has an even number of edges. 
            Then by Theorem~\ref{thm:evenline} and Lemma~\ref{ob:list}(b), $\gS'$ has a perfect matching
            $M$. We add all edges of $M$ to $\gP$. 

            For example, in Figure~\ref{fig:graphs}, we add the perfect matching $\left\{A_2B_2 ,B_1A_3  \right\}$  in the central connected component to the
            patrol graph.
        \item (case ii) If case (i) does not hold, but none of the vertices of $\gB'$ belong to empty intervals, then
            by Lemma~\ref{ob:list}(c), the graph~$\gB'$ has at least one cycle $C$.
            However, by Lemma~\ref{ob:list}(a), the graph $\gB'$ is bipartite meaning, $C$ is an even cycle.
            Consequently, by Lemma~\ref{lem:maingraph}, we can decompose $\gB'$ into a number of 2-paths and a claw.
            The line graph of a claw is a triangle and thus it follows that the vertices of $\gS'$ can be decomposed into
            a matching $M$ and a triangle $\Delta$. 
            We add the edges of $M$ and $\Delta$ to $\gP$.

            For example, in Figure~\ref{fig:graphs} we have added the triangle formed by $A_4, B_4$, and $B_3$ to the patrol graph
            (here the matching part is empty).
        \item (case iii) If none of the cases (i) or (ii) hold, then at least one of the
            vertices of $\gB'$ belongs to an empty interval.
            In this case, we simply decompose $\gS'$ into a matching and an isolated vertex, since by Theorem~\ref{thm:line},
            $\gB'$ can be partitioned into a number of 2-paths and one edge.
            We add the edges in the matching
            to $\gP$ (in Figure~\ref{fig:graphs} we have $A_1$ as an isolated vertex).
            
    \end{itemize}

    \obpatrol*

    \subsection{Obtaining a Cyclic Solution}\label{app:cyclic}
    The main and the most crucial consequence  of our sweeping strategy is the following lemma. 
    \shortcut*
    \begin{proof}
      We have two cases: either both $A$ and $B$ are left or right endpoints, or one of them
      is a right endpoint while the other one is a left endpoint. 
      By symmetry, it suffices to consider the following two cases.
      \begin{itemize}
        \item $A$ and $B$ are both left endpoints. 
          For $A$ and $B$ to be placed in the same bag, they must be placed in the right bag of an interval $I_3$ (of a robot $r_3$) during the second placement of $A$ and $B$.
          Note that $I_3$ could be equal to both $I_2$ or $I_1$ but since $A$ and $B$ are both
          left endpoints, $I_1$ and $I_2$ are distinct intervals.
          As we argued during the second placement, $r_3$ visits $s_1$ in the
          time interval $[t_0+L-\ell(A),t_0+L+\ell(A)]$ and also visits $s_2$ in the time interval
          $[t_0+L-\ell(B),t_0+L+\ell(B)]$.
          As $r_3$ travels at unit speed, it thus follows that the distance between $s_1$ and $s_2$ cannot be larger than the maximum distance between the points of the two intervals
          $[t_0+L-\ell(A),t_0+\ell(A)+L]$, $[t_0+L-\ell(B),t_0+\ell(B)+L]$.
          Both of these intervals are centered at the point $t_0+L$, one has diameter $2\ell(A)$ and the other the diameter $2\ell(B)$ and thus the maximum distance between them is 
          $\ell(A) + \ell(B)$.
        \item $A$ is a left endpoint and $B$ is a right endpoint.
          Let $r_1$ and $r_2$ be the robots that correspond to $I_1$ and $I_2$.
          In this case, in order for $A$ and $B$ to be placed in the same bag, the second placement of $A$ should be placed in the
          same bag as the first placement of $B$ or vice versa. 
          W.l.o.g, assume the former, which implies $s_1$ must be visited by robot $r_2$
          during the time interval $[t_0+L-\ell(A), t_0 + L -\ell(A)]$; 
          however, observe that $s_2$ is visited by $r_2$ at time $t_0 + L -\ell(B)$ since this is the
          position of the endpoint $B$ on the time axis. 
          Once again, it follows that the distance between $A$ and $B$ is at most 
          the maximum distance between the point $t_0 + L -\ell(B)$ and the interval 
          $[t_0+L-\ell(A), t_0 + L + \ell(A)]$ which equals $\ell(A) + \ell(B)$.
      \end{itemize}
    \end{proof}

    Using the patrol graph defined previously, we can obtain an efficient cyclic solution. 
    Let $\gP_i$ be a connected component of $\gP$.
    We turn $\gP_i$ into one cycle. 
    Let $x$ be the number of black edges of $\gP_i$; it follows that $\gP_i$ has
    $2x$ vertices.
    Let $\gP_i^c$ be the graph obtained by contracting all the blue edges in $\gP_i$
    (see Figure~\ref{fig:graphs}(e)).
    We then use Lemma~\ref{lem:eulerizing} to duplicate a number of (black) edges of $\gP_i^c$ 
    such that the resulting graph, denoted by $\gP_i^{{\scriptscriptstyle Ec}}$, is Eulerian (see Figure~\ref{fig:graphs}(f)). 
    We call $\gP_i^{{\scriptscriptstyle Ec}}$ an Eularized contracted patrol graph (ECPG).
    We now look back at $\gP_i$.
    For every edge that is duplicated in $\gP_i^{{\scriptscriptstyle Ec}}$, we duplicate the edge as well as its
    two endpoints; this basically corresponds to adding another copy of the same interval to the picture. 
    In addition, we add a blue edge between the two copies of the same vertex as well as between 
    any two vertices that are connected in the shortcut graph. 
    See Figure~\ref{fig:graphs-dup} (b). 
    Let $\gP_i^{f}$  be the resulting graph. 
    The collection of all such graphs will form the \textit{final patrol graph} and
    thus $\gP_i^{f}$ is a connected component of the final patrol graph. 

    \begin{observation}\label{ob:final}
      Let $\gP_i^f$ be a connected component of the final graph. It has the following properties.
      $\gP_i^f$ consists of a number of pairwise non-adjacent black edges and a number of blue edges.
      Contracting all the blue edges of $\gP_i^f$ results in an  Eulerian graph.
      The blue edges can be partitioned into a number of vertex disjoint cliques.  
      For every blue edge $(v_1,v_2)$, the vertices $v_1$ and $v_2$ represent two
      endpoints $A_1$ and $A_2$ of two non-empty intervals $I_1$ and $I_2$ such that
      for the sites $s_i$ visited at $A_i$, $i=1,2$, we have
      $d(s_1,s_2) \le \ell(A_1) + \ell(A_2)$.
    \end{observation}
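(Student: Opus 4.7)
The plan is to verify the four properties one at a time by tracing how $\gP_i^f$ is built from $\gP_i$ through blue-edge contraction, Eulerization, and uncontraction. I expect properties (1) and (4) to follow almost directly from the construction, while properties (2) and (3) will require more careful bookkeeping of how vertices are identified by blue edges.

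First, property (1) is immediate: each black edge in $\gP_i^f$ is a copy of some interval $I_j$ with its own fresh pair of vertices representing the two endpoints of that particular copy, so no two black edges share a vertex. For property (4), I would split into two cases. Blue edges inherited from $\gP_i$ correspond by construction to edges of the shortcut graph $\gS$, so Lemma~\ref{lem:shortcut} gives the bound directly. Blue edges introduced during the uncontraction step either connect the two copies of the same duplicated vertex (in which case $s_1=s_2$, so trivially $d(s_1,s_2)=0 \le \ell(A_1)+\ell(A_2)$) or connect two vertices sharing a bag, which again makes them adjacent in $\gS$ and lets us invoke Lemma~\ref{lem:shortcut}.

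For property (2), the plan is to argue that contracting all blue edges of $\gP_i^f$ produces exactly the ECPG $\gP_i^{{\scriptscriptstyle Ec}}$. The blue edges of $\gP_i^f$ consist of (a) the shortcut edges that were contracted in forming $\gP_i^c$ from $\gP_i$, and (b) the new edges inserted between the two copies of each duplicated vertex in the uncontraction step. Contracting both classes undoes precisely these two operations, leaving behind the multiset of black edges used in $\gP_i^{{\scriptscriptstyle Ec}}$ on the correct vertex set. Since $\gP_i^{{\scriptscriptstyle Ec}}$ was produced by the Eulerization of Lemma~\ref{lem:eulerizing}, it is Eulerian, so property (2) follows.

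Property (3) is where I expect the real work to lie, because the duplication step introduces new blue edges and we must verify that they do not destroy the clique structure. The blue edges of $\gP_i$ already form a matching together with some triangles by Lemma~\ref{ob:adjacency}, which are vertex-disjoint cliques. The key idea I plan to use is that every blue edge of $\gP_i^f$ can be attributed to a single bag $\beta$: its endpoints are placements (or copies of placements) sitting in $\beta$, and the twin copy of any placed endpoint is associated with the same bag as its original. I would then show that grouping the vertices of $\gP_i^f$ by their attributed bag yields a partition in which each class forms a clique under the blue edges, because the newly added blue edges are exactly those between vertices sharing a bag. A short case analysis driven by Lemma~\ref{ob:adjacency}, distinguishing matching edges from triangles and whether an incident vertex has been duplicated, will confirm that duplication merely enlarges each existing clique by the twin copies while preserving vertex-disjointness from the other cliques, completing the proof.
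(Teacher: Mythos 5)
Your argument matches the paper's own proof: properties (1) and (2) are read off the construction, property (4) is handled by splitting into the copy-to-original case ($s_1=s_2$, so the distance is zero) and the case where the original endpoints already share a bag (so Lemma~\ref{lem:shortcut} applies), and property (3) starts from the $K_2$/$K_3$ decomposition of Lemma~\ref{ob:adjacency} and shows that duplication can only enlarge each blue clique, which is exactly the paper's route. One small caution about your bag-attribution framing: the intermediate claim that the newly added blue edges are ``exactly those between vertices sharing a bag'' overstates what the construction provides---new blue edges run only from a duplicated copy to its own original and to the other members of that original's blue clique (and each vertex sits in \emph{two} bags, so ``its attributed bag'' must be taken to mean the bag determining its clique under Lemma~\ref{ob:adjacency}, not any bag it happens to lie in)---but your closing sentence phrases the correct invariant, so the argument goes through.
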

    \begin{proof}
      The first two claims are obvious due to our constructions and definitions.
      For the third claim, observe that by Lemma~\ref{ob:adjacency}, the blue
      edges of a patrol graph can be decomposed into (vertex disjoint) a number of $K_2$'s and $K_3$'s. 
      However, the duplication process also duplicates the blue edges, meaning, if a vertex of a blue clique $K_i$ is duplicated,
      it is still connected to all the other vertices of $K_i$ as well as it duplicate copy. 
      Thus, a blue clique $K_{i}$ can turn into a blue clique $K_{i'}$ where $i'\le 2i$.
      For the last property, if the edge $(v_i, v_2)$ is between the two copies of the same interval,
      then it follows that $I_i = I_2$ and thus $s_i = s_2$ and thus $d(s_i,s_2) = 0$.
      Otherwise, zero, one or both vertices of $v_i, v_2$ could be copies of another vertex.
      Assume $v'_i$ and $v'_2$ are the original vertices. By construction it holds that
      $(v'_i, v'_2)$ is a blue edge in a patrol graph and thus the last property holds
      by Lemma~\ref{lem:shortcut}.
    \end{proof}

    A \emph{bichromatic cycle} is a cycle of even size that is made of
    edges of alternating colors of black and blue.
    We can use the above observation to obtain the following result.
    \begin{lemma}\label{lem:onecycle}
      Let $\gP_i^f$ be a connected component of the final patrol graph with $x$
      black edges (the duplicates of the black edges are also counted).
      We can find a bichromatic cycle in $\gP_i^f$ that covers all the black edges.
      Consequently, it yields a cyclic route of length at most $xL$ that visits all the sites visited
      by the intervals that  correspond to the black edges of $\gP_i$.
    \end{lemma}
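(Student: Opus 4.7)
The strategy is to obtain the bichromatic cycle by taking an Euler tour of the blue-contracted graph and then re-inserting a single blue edge between each pair of consecutive black edges of the tour; the length bound will fall out of a telescoping cancellation of the $\ell(\cdot)$ terms along the cycle.

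By Observation~\ref{ob:final}, contracting every blue edge of $\gP_i^f$ yields a connected Eulerian multigraph whose edges are precisely the $x$ black edges of $\gP_i^f$. Fix an Euler tour, and let $e_1,e_2,\dots,e_x$ be the order in which it traverses these black edges. Between $e_j$ and $e_{j+1}$ the tour sits at some contracted vertex $c_j$ corresponding to a blue clique $K_j$ in $\gP_i^f$; in the uncontracted picture, $e_j$ arrives at some $u_j\in K_j$ and $e_{j+1}$ departs from some $v_j\in K_j$. Because black edges of $\gP_i^f$ are pairwise non-adjacent we have $u_j\neq v_j$, and since $K_j$ is a clique in the blue subgraph, the edge $b_j:=(u_j,v_j)$ already exists in $\gP_i^f$. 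Inserting each $b_j$ yields the closed walk $e_1,b_1,e_2,b_2,\dots,e_x,b_x$, which alternates colors and covers every black edge exactly once.

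To confirm that this is an honest bichromatic cycle rather than a walk with repeated blue edges, note that each vertex of $\gP_i^f$ is the endpoint of exactly one black edge (by non-adjacency) and is therefore incident to exactly one blue edge of the walk. Consequently the blue edges of the walk contained in any fixed clique $K$ form a perfect matching on $K$; such a matching exists since $|K|$ equals the degree of the corresponding contracted vertex, which is even. In particular no blue edge is reused. For the length, each black edge $e_j$ represents a robot's trajectory during a shrunken interval of duration $L-\ell(A_j)-\ell(B_j)$ and so has length at most this value, while each blue edge $b_j=(u_j,v_j)$ has length at most $\ell(u_j)+\ell(v_j)$ by Observation~\ref{ob:final}. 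Since in the constructed cycle every vertex is incident to precisely one black and one blue edge, the sums $\sum_j(\ell(A_j)+\ell(B_j))$ and $\sum_j(\ell(u_j)+\ell(v_j))$ both equal $\sum_{v}\ell(v)$ and cancel, leaving total length at most $xL$. Since the black edges of the cycle include (possibly with duplication) every black edge of $\gP_i$, all sites visited by the corresponding intervals lie on the resulting cyclic route.

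The main delicacy is simultaneous consistency: across all cliques, the arrivals and departures induced by the single chosen Euler tour must jointly form a collection of disjoint blue edges. This is forced precisely by the combination of the even-degree Eulerian structure (which gives $|K_j|/2$ visits to $K_j$) and the one-black-edge-per-vertex property (which ensures each vertex of $K_j$ appears as an arrival or departure exactly once), so no further case analysis is needed; any valid local pairing within each clique automatically suffices.
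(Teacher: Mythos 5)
Your proof is correct and takes essentially the same route as the paper's: fix an Euler tour of the blue-contracted Eulerian graph, re-insert between consecutive black edges the blue edge joining the arrival endpoint to the departure endpoint, and bound the length by observing that each $\ell(\cdot)$ contribution from the shrunken intervals is exactly cancelled by the shortcut charges. The one place you go beyond the paper is the explicit verification that the resulting closed walk is an honest bichromatic cycle (no repeated vertices or blue edges), via the observation that every vertex lies on exactly one black edge and is therefore the arrival or departure endpoint of exactly one passage of the Euler tour; the paper simply calls the object a ``bichromatic tour'' without making this point, so your addition is a small but welcome tightening rather than a different approach.
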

    \begin{proof}
      Let $\gP_i^{{\scriptscriptstyle Ec}}$ be the ECPG associated with $\gP_i^f$.
      Fix one Euler tour of $\gP_i^{{\scriptscriptstyle Ec}}$, i.e., an
      ordering $e_1, \cdots, e_x$ of the edges of $\gP_i^{{\scriptscriptstyle Ec}}$ 
      that forms a closed walk.  
      See Figure~\ref{fig:graphs-dup}(a). 
      The first part of the lemma is easy:
      for every $j$, $1 \le j \le x$, $e_j$ and $e_{j+1}$ (where $e_{x+1}=e_1$) are connected with a
      blue edge by Observation~\ref{ob:final} and thus the Euler tour in $\gP_i^{{\scriptscriptstyle Ec}}$
      corresponds to a bichromatic tour in the final patrol graph. 

      For the last part, 
      consider the step $j$ of the walk and let $I_j= [A_j,B_j]$ and  $I_{j+1}= [A_{j+1},B_{j+1}]$ be the intervals
      that correspond to the edges $e_j=(v_j, u_j)$ and $e_{j+1}=(v_{j+1}, u_{j+1})$ of
      $\gP_i^{{\scriptscriptstyle Ec}}$ where we assume $A_j$, $B_j$, $A_{j+1}$, and $B_{j+1}$
      correspond to $v_j$, $u_j$, $v_{j+1}$, and $u_{j+1}$ respectively.
      We make the convention that index $x+1$ refers to index 1 in the above definitions.
      Note that with this notation, $e_j$ and $e_{j+1}$ could be directed in either way in the Euler tour, i.e.,
      each edge can be directed from left to right, or right to left. 
      If $e_j$ is directed from $v_j$ to $u_j$ (i.e., from $A_j$ to $B_j$), 
      then define $\Start(e_j)=A_j$  and $\End(e_j)=B_j$ and if it is from $u_j$ to $v_j$ ($B_j$ to $A_j$), 
      then define $\Start(e_j)=B_j$  and $\End(e_j)=A_j$.
      Consider the same definitions for $e_{j+1}$.

      We now turn the Euler tour into a cyclic tour by considering an imaginary robot that travels along it.
      The robot starts at $\Start(e_1)$ and takes $j$ steps. 
      At step $j$, it travels the entire length of $e_j$, meaning, it traverse
      the route that corresponds to $e_j$, then it takes 
      the shortcut from $\End(e_j)$ to $\Start(e_{j+1})$ (to be more precise, it takes the shortcut from the
      sites that correspond to those endpoints) and then continues with step $j+1$. 
      Note that at the end of step $x$, it takes the shortcut from $\End(e_x)$ to $\Start(e_1)$. 
      Observe that for each edge $e_j$, we take one shortcut from $e_{j-1}$ to $e_{j}$ and one
      shortcut from $e_j$ to $e_{j+1}$.
      Next, observe that by Lemma~\ref{lem:shortcut}, we can charge the cost of the first shortcut to
      $\ell(\End(e_{j-1}))+\ell(\Start(e_{j}))$ and the cost of the second shortcut to 
      $\ell(\End(e_{j}))+\ell(\Start(e_{j+1}))$.
      The crucial point here is that each of $\ell(\Start(e_j))$ and $\ell(\End(e_j))$
      is used exactly once for each edge. 
      Finally, observe that $\ell(\Start(e_j))$ plus $\ell(\End(e_j))$ plus the length of 
      the interval corresponding to $e_j$ is exactly $L$.
      This shows that the imaginary robot travels a tour of length at most $xL$. 
    \end{proof}

        We are now almost done and we can finally describe our cyclic solution. 
        Bear in mind that in our scheme, we have $k'$ non-empty intervals that corresponds to tours that
        visit all the sites. 
        We say a robot is \textit{useful} if its interval is non-empty and \textit{useless} otherwise. 
    \ignore{
    If $\gP_i$ contains more than $k'$ vertices, then we call it the \textit{big} connected component.
    As $\gP$ has $2k'$ vertices, clearly there can be at most big connected component. 
    All the other connected components are \textit{small}.
    }
        Consider a connected component $P_i$ of the patrol graph (\textbf{not} the final patrol graph) 
        with $y$ black edges.
        If $y > k'/2$, then we call $P_i$ a \textit{big} connected component, otherwise, it is \textit{small}.
        Observe that by this definition, there can be at most one big connected component.
        Consider the graphs $\gP^c_i$, $\gP_i^{{\scriptscriptstyle Ec}}$, and $\gP_i^f$ associated with $\gP_i$. 
        We now consider a few cases. 
        \begin{itemize}
          \item (case 1) $\gP^c_i$ has at least two vertices $v_1$ and $v_2$ of degree 1. 
            Clearly, this means that $v_1$ and $v_2$ were not adjacent to any blue edges before
            contraction of blue edges, i.e., they have still degree 1 in $\gP_i$. 
            In this case, by Lemma~\ref{lem:eulerizing}, we can Eulerize $\gP^c_i$ by duplicating at most all the
            $y$ edges. As a result, $\gP_i^f$ has at most $2y$ edges and therefore by Lemma~\ref{lem:onecycle},
            we can turn it into a tour of length at most $2yL$.
            In this case, we can distribute all the $y$ useful robots corresponding to the $y$ non-empty intervals of
            $\gP_i$ and one useless robot. We have a useless robot available
            because by Lemma~\ref{ob:adjacency}, a vertex that is not adjacent to a blue edge can be charged to a
            bag of an empty interval and thus two such vertices can pay for a useless robot.
            Thus, using $y+1$ robots, we obtain a tour with latency
            \[
              \frac{2yL}{y+1} \le \frac{2(k-1)L}{k}
            \]
            where the inequality uses the fact that $y\le k-1$ (as there is at least one useless robot). 

          \item (case 2) $\gP^c_i$ has exactly one vertex $v_1$ of degree 1. 
            In this case, by Lemma~\ref{lem:eulerizing}, we can Eulerize $\gP^c_i$ by duplicating at most $y-1$
            edges. As a result, $\gP_i^f$ has at most $2y-1$ edges and therefore by Lemma~\ref{lem:onecycle},
            we can turn it into a tour of length at most $(2y-1)L$.
            But here we have two additional sub-cases. 
            If $P_i$ is not a big connected component, then we simply assign all the $y$ robots to the resulted
            Euler tour.
            Here, $P_i$ and $\gP^c_i$ have at most $k'/2$ edges, meaning, $y\le k'/2$.
            In addition, since $v_1$'s degree is 1,  there must be at least one empty bag---and, hence, at least one useless robot---by Lemma 5. Thus $k' \le k-1$. (Note that we are not using this useless robot here; we are still assigning only $y$ robots to the tour. We are only using the existence of the useless robot to conclude that $k’ \le k-1$.) Hence, the latency can be bounded as follows.
            \[ 
              \frac{(2y-1)L}{y} \le \frac{(2k'/2-1)L}{k'/2} = \frac{2(k'-1)L}{k'} < \frac{2(k-1)L}{k}.
            \]
            However, if $P_i$ is big, then we allocate one useless robot to the resulting tour.
            Note that here as $\gP^c_i$ only has one vertex of degree one, it can only be charged to one bag
            of an empty interval, i.e., ``half a useless robot''.
            But since we have at most one big component, we can afford to allocate a full useless robot to the resulting tour. 
            The latency here would be 
            \[ 
              \frac{(2y-1)L}{y+1} \le \frac{(2k'-1)L}{k'+1} \le \frac{(2k-3)L}{k} < \frac{2(k-1)L}{k}.
            \]

          \item (case 3) $\gP^c_i$ has no vertex of degree one. 
            In this case, by Lemma~\ref{lem:eulerizing}, we can Eulerize $\gP^c_i$ by duplicating at most $y-2$
            edges. As a result, $\gP_i^f$ has at most $2y-2$ edges and therefore by Lemma~\ref{lem:onecycle},
            we can turn it into a tour of length at most $(2y-2)L$.
            Thus, using $y$ robots, we obtain a tour with latency
            \[
              \frac{(2y-2)L}{y} \le \frac{2(k-1)L}{k}.
            \]
        \end{itemize}
        In all of the above cases, we obtain a cyclic tour of latency at most $2(k-1)L/k$ proving our
        main result. 

\section{On the decidability of the decision problem}\label{appendix:decide}

We note that the problem whether the latency is at most $\ell$ is decidable for the case that $\ell$ and the distances are all integers.\footnote{We thank a reviewer for pointing out this case.} To see this, consider the graph, in which we add dummy sites along the edges in $P \times P$ at unit distance.

We already know that we can assume that robots at time $t=0$ start at sites, and that at any time they either wait at a site or move at unit speed. Assume we have an optimal schedule $\sigma (R)$ with these properties and latency $L$, where $L\leq \ell$. We modify this schedule by rounding the times $t$ at which a robot leaves a site down the next integer $\lfloor t \rfloor$ (and then let it move with unit speed to the next site). 

First observe that in such a schedule robots always reach sites at integer times. Thus, inductively we can conclude that the robot that would have left a site at time $t$ already was at the site at time $\lfloor t \rfloor$  (and therefore can actually leave then).

Secondly the latency cannot increase to a value larger than $\ell$: The latency might increase by some value $x < 1$ and is an integer after adapting the schedule. If it would be larger than $\ell$, it would be at least $\ell+1$. But then the latency $L$ of the original schedule would have been $L \geq \ell + 1 - x > \ell$, a contradiction.

With these observations, we can now conclude that the problem is decidable. To describe a schedule, we only need to know for any time $t = 0, 1, \ldots$ at which site (original or dummy) it is.
Consider the schedule $\sigma(R)(t) = (f_1(t), \ldots, f_k(t))$ at such a time.
 If we have $N$ sites, there are only $N^k$ options for $\sigma(R)(t)$. This means by the pigeonhole principle that after $N^k +1$ time steps, we must have had a repeated schedule $\sigma(R)(t)$. At that point in time, we can make the schedule periodic without increasing the latency. Thus, by checking at most $\left(N^k\right)^{N^k +1}$ schedules we can decide the problem.

This however, leaves the case for points in $\mathbb{R}^2$ open, even if the points have integer coordinates.
\fi
% \input{conj-combined-new}

%--------------------------------------------------------------------
\section{Cyclic Solutions}\label{sec:cyclic}
%--------------------------------------------------------------------
In this section we show how to approximate an optimal cyclic solution
to the patrol scheduling problem for $k$ robots in a metric space~$(P,d)$.  
We start with some notation and basic observations.

For a subset $Q\subseteq P$, let $\TSP(Q)$ denote an optimal TSP tour of $Q$ 
and let $\tsp(Q)$ denote its total length.  %Similarly, 
Let $\MST(Q)$ denote a 
minimum spanning tree of $Q$. % and let $\mst(Q)$ denote its total length. 
Now consider a 
% \haotian{disjoint?} 
% \mdb{for me, partitions are disjoint by definition}
partition $\Pi=\{P_1, \ldots, P_t\}$ of $P$, where each subset $P_i$
is assigned $k_i$ robots such that $\sum_{i=1}^t k_i =k$. A
\emph{cyclic solution} for this partition and distribution of robots
is defined as follows. For each $P_i$ there is a cycle $C_i$
such that the $k_i$ robots assigned to $P_i$ start 
evenly spaced along~$C_i$ and then traverse $C_i$ at maximum speed
in the same direction. Hence, the latency~$L$ of such a cyclic 
solution satisfies $L \geq \max_i (\tsp(P_i)/k_i)$, with equality 
if $C_i = \TSP(P_i)$ for all~$i$.

To prove the main theorem of this section we need several helper lemmas. 
%whose proofs are in Appendix~\ref{apnd:cyclicsol}.
Let $\Pi = (P_1, \ldots, P_t)$ be a partition of $P$ and let 
$E\subseteq P\times P$ be a set of edges. 
The \emph{coarsening of $\Pi$ with respect to $E$} is the partition~$\Pi'$ of $P$
given by the connected components of the graph $\left(\bigcup_i{\MST(P_i)}\right) \cup E$. 
%--------------------------------------------------------------------
\begin{lemma}
\label{lem:add_edges}
Let $S$ be a cyclic solution with partition $\Pi = (P_1, \ldots, P_t)$ and  latency $L$.  
Let $\Pi' = (P'_1, \ldots, P'_{t'})$ be the coarsening of $\Pi$ with respect to an edge set $E$ of total length~$\ell$.  
Then there is a cyclic solution $S'$ with partition $\Pi'$ and latency $L'$ such that
$L' \leq L + \ell$.
%\mdb{Used to be $<$ but it seems we prove $\leq$.}
%\peyman{Yes, I think $\le$ is correct.}
\end{lemma}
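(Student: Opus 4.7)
The plan is to construct $S'$ component by component. By the coarsening definition, each new component $P'_j$ is a union $P_{i_1}\cup\cdots\cup P_{i_m}$ of old components, and since the $P'_j$'s are the connected components of $\bigl(\bigcup_i \MST(P_i)\bigr)\cup E$, every edge in $E$ has both endpoints inside a single $P'_j$. Let $E_j\subseteq E$ consist of those edges that lie inside $P'_j$, with total length $\ell_j$; then $\sum_j\ell_j=\ell$. I will assign the $k'_j:=k_{i_1}+\cdots+k_{i_m}$ robots used for the old groups to the new group $P'_j$, and construct a cycle $C_j$ of bounded length for them to share.

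To bound $\tsp(P'_j)$, pick a subset $T_j\subseteq E_j$ forming a spanning tree of the ``component graph'' on $\{P_{i_1},\ldots,P_{i_m}\}$; its total length is at most $\ell_j$. Now form the multigraph on $P'_j$ consisting of $\TSP(P_{i_1}),\ldots,\TSP(P_{i_m})$ together with two copies of each edge in $T_j$. This multigraph is connected and every vertex has even degree, so it admits an Eulerian circuit of length at most $\sum_r\tsp(P_{i_r}) + 2\ell_j$, and shortcutting via the triangle inequality yields a tour of $P'_j$ of the same total length. Because $S$ has latency $L$, we have $\tsp(P_{i_r})\leq k_{i_r}\cdot L$ for each $r$, and summing gives
\[
\tsp(P'_j) \;\leq\; k'_j\cdot L \;+\; 2\ell_j,
\qquad\text{so}\qquad
L'_j \;:=\; \frac{\tsp(P'_j)}{k'_j} \;\leq\; L + \frac{2\ell_j}{k'_j}.
\]

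A small case analysis completes the proof. If $m=1$ then no merging occurred at $P'_j$, so $\ell_j=0$ and $L'_j\leq L$. If $m\geq 2$ then $k'_j\geq m\geq 2$ (each merged old group contributed at least one robot), hence $2\ell_j/k'_j\leq\ell_j\leq\ell$ and therefore $L'_j\leq L+\ell$. Taking the maximum over $j$ gives $L'\leq L+\ell$. The only step that needs real work is the Eulerian-plus-shortcut bound $\tsp(P'_j)\leq \sum_r\tsp(P_{i_r})+2\ell_j$; the arithmetic trick that converts the naive bound $L+2\ell$ into the sharper $L+\ell$ is simply the observation that the factor of two from edge doubling is absorbed by the fact that $k'_j\geq 2$ whenever any merging actually happens.
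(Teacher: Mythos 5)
Your proof is correct and follows essentially the same strategy as the paper's: merge the old cycles by doubling a connecting subset of $E$, take an Euler tour and shortcut, then observe that the factor $2$ from edge-doubling is absorbed because $k'_j\ge 2$ whenever any merging occurs. The only cosmetic difference is that you connect the optimal $\TSP(P_{i_r})$ tours while the paper connects the cycles $C_j$ actually used in $S$ (both satisfy $\|\cdot\|\le k_{i_r}L$, so the bound is the same), and you make the $m=1$ base case and the per-component edge accounting $\sum_j\ell_j=\ell$ explicit, which the paper leaves implicit.
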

%--------------------------------------------------------------------
\begin{proof}
Let $C_1,\ldots,C_t$ be the cycles 
used in~$S$. Consider a subset $P'_i\in\Pi'$, and assume without 
loss of generality that $P_i'$ is the union of the subsets $P_1,\ldots,P_s$ from $\Pi$. 
Then there is a set $E_i\subseteq E$ of $k-1$ edges such that 
$\left(\bigcup_{j=1}^s C_j\right)\cup E_i$ is connected. Moreover, there is a 
cycle $C'_i$ covering all sites in $P'_i$ traversing the edges of each $C_j$
once and the edges of $E_i$ twice. Hence, 
\[
\|C'_i\| = \sum_{j=1}^s \|C_j\| + 2 \cdot \|E_i\|,
\]
where $\|\cdot\|$ denotes the total length of a set of edges.
Since the latency in $S$ is $L$, we know that
$\|C_j\| \leq k_j L$. Hence, using $\sum_{j=1}^s k_j\geq 2$ robots
for the cycle~$C'_i$, the latency for the sites in $P'_i$ is at most
\[
\frac{\|C'_i\|}{\sum_{j=1}^s k_j} 
= \frac{\sum_{j=1}^s \|C_j\|+ 2 \cdot \|E_i\|}{\sum_{j=1}^s k_j}
\leq \frac{\sum_{j=1}^s k_j L + 2 \cdot \|E_i\|}{\sum_{j=1}^s k_j}
\leq L + \|E_i\|
\leq L + \ell.
\]
Thus the latency for any subset $P'_i\in \Pi'$ is at most $L+\ell$.
\end{proof}

%---------------------------------------------------------------------
\begin{lemma}
\label{lem:structured_opt}
Let $L^*$ be the latency of an optimal cyclic solution. For any $\varepsilon>0$, 
there exists a cyclic solution with partition $\Pi=(P_1, \ldots, P_t)$ and
latency $L < (1+\varepsilon)L^*$ such that for any pair $i\neq j$
we have $d(P_i, P_j) > \varepsilon\cdot L^*/k$, where 
$d(P_i, P_j) := \min \{ d(x, y) : x\in P_i \mbox{ and } y \in P_j\}$.
\end{lemma}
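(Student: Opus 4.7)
The plan is to start from an optimal cyclic solution and greedily merge pairs of parts that lie too close together, using Lemma~\ref{lem:add_edges} to pay for the merges. Fix an optimal cyclic solution $S^*$ with partition $\Pi^*=(P_1^*,\ldots,P_t^*)$ and latency~$L^*$. Since each $P_i^*$ is assigned at least one of the $k$ robots, we have $t \leq k$, so there are at most $k-1$ pairs we might ever need to merge.

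The construction I have in mind is iterative. Initialize $E := \emptyset$ and maintain the current partition $\Pi$ as the coarsening of $\Pi^*$ with respect to~$E$. While there exist two parts $Q,Q'\in\Pi$ with $d(Q,Q') \leq \varepsilon L^*/k$, I pick such a pair and add to $E$ a single edge $(x,y)$ with $x\in Q$, $y\in Q'$ of length $d(Q,Q') \leq \varepsilon L^*/k$. Each iteration strictly decreases $|\Pi|$ by one, so after at most $t-1 \leq k-1$ iterations the process halts, and at that point the final coarsening $\Pi'$ has the property that any two of its parts are at distance strictly greater than $\varepsilon L^*/k$, which is the geometric condition required by the lemma.

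For the latency bound, observe that the total length of $E$ satisfies
\[
\|E\| \;\leq\; (k-1)\cdot \frac{\varepsilon L^*}{k} \;<\; \varepsilon L^*.
\]
Applying Lemma~\ref{lem:add_edges} to $S^*$ with the edge set~$E$ produces a cyclic solution with partition~$\Pi'$ and latency at most $L^* + \|E\| < (1+\varepsilon)L^*$. Combined with the separation property established above, this is exactly the statement of the lemma.

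The one bookkeeping step that needs care is checking that the greedy procedure really does produce the partition $\Pi'$ equal to the coarsening of $\Pi^*$ with respect to the final accumulated $E$, so that a single application of Lemma~\ref{lem:add_edges} is legitimate. This is immediate from the definition of coarsening: at every intermediate stage the maintained partition is exactly the set of connected components of $\bigl(\bigcup_i \MST(P_i^*)\bigr)\cup E$, so I do not foresee any genuine obstacle in the argument; the whole proof is essentially a bounded greedy merge combined with the helper lemma.
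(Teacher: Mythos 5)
Your proof is correct and takes essentially the same approach as the paper: both merge nearby parts using at most $k-1$ short edges (each of length at most $\varepsilon L^*/k$) and invoke Lemma~\ref{lem:add_edges} once to bound the latency increase by $\varepsilon L^*$. The only cosmetic difference is that you build the edge set $E$ greedily one merge at a time, whereas the paper first coarsens with respect to \emph{all} edges of length at most $\varepsilon L^*/k$ and then extracts a minimal spanning subset $E^*$ of size at most $k-1$; the resulting partition and argument are identical.
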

%---------------------------------------------------------------------
\begin{proof}
Let $S^*$ be an optimal solution with partition $\Pi^* = (P^*_1, \ldots, P^*_q)$,
where $q \leq k$. Let $E_{\short}$ be the set of all edges of the complete graph 
of the metric space with length at most~$\varepsilon L^*/k$.  
Let $\Pi = (P_1, \ldots, P_t)$ be the partition obtained by coarsening $\Pi^*$
with respect to $E_{\short}$, and let $E^*\subseteq E_{\short}$ be a minimal subset 
such that coarsening $\Pi^*$ with $E^*$ gives the same partition $\Pi$.
Observe that as $q\leq k$, we have $|E^*|\leq k-1$.
Lemma~\ref{lem:add_edges} implies that there is a cyclic solution $S$ with 
partition $\Pi$ and latency at most
\[
L^* + |E^*|\cdot (\varepsilon L^*/k) < (1+\varepsilon)L^*.
\]
Moreover, since $\Pi$ is a coarsening of $\Pi^*$ with respect to $E_{\short}$, 
the pairwise distance between any two sets of $\Pi$ is larger than~$\varepsilon L^*/k$.
\end{proof}
%---------------------------------------------------------------------
\begin{lemma}
\label{lem:few_long_edges}
Suppose there is a cyclic solution of latency $L$ for a given metric space~$(P,d)$
and $k$ robots. Then $\MST(P)$ has fewer than $k(1+1/\alpha)$ edges of length more 
than $\alpha L$, for any $0<\alpha\leq 1$. 
\end{lemma}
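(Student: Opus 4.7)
Proof plan for Lemma~\ref{lem:few_long_edges}:

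The plan is to analyze the structure of $\MST(P)$ by removing all edges of length greater than $\alpha L$ and bounding the resulting number of components using the cyclic solution. Let $T := \MST(P)$, let $E_{\text{long}}$ be its edges of length $>\alpha L$, and let $F := T \setminus E_{\text{long}}$. Since $T$ has $n-1$ edges, $F$ is a forest whose number of connected components equals $1+|E_{\text{long}}|$; call them $Q_1, \ldots, Q_c$. Showing $|E_{\text{long}}| < k(1+1/\alpha)$ is equivalent to showing $c \leq k(1+1/\alpha)$.

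First I would establish the separation property: for any two sites $x \in Q_j$, $y \in Q_{j'}$ with $j \neq j'$, we have $d(x,y) > \alpha L$. This follows from the MST cut property: the $x$-$y$ path in $T$ must traverse at least one removed edge $e \in E_{\text{long}}$; removing $e$ splits $T$ into two parts containing $x$ and $y$ respectively, so $d(x,y) \geq \|e\| > \alpha L$ (otherwise swapping would give a shorter spanning tree).

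Next, let $\Pi = (P_1, \ldots, P_t)$ be the partition of the given cyclic solution, with $k_i$ robots assigned to $P_i$ and $\sum_i k_i = k$. For each $i$, let $m_i$ be the number of clusters $Q_j$ that $P_i$ meets. The TSP tour of $P_i$ is a cycle that visits every $Q_j$ intersecting $P_i$; viewed at the cluster level, if $m_i \geq 2$ this cyclic structure contains at least $m_i$ edges whose endpoints lie in distinct clusters, each of length strictly greater than $\alpha L$ by the separation property. Combined with $\tsp(P_i) \leq k_i L$ (which holds because the cyclic solution has latency $L$), this gives the key bound
\[
m_i \alpha L < \tsp(P_i) \leq k_i L, \quad \text{hence } m_i < k_i/\alpha \text{ whenever } m_i \geq 2.
\]
For the trivial case $m_i = 1$ there is no such constraint, but the contribution is only $1$ per part.

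Finally, I would combine these bounds. Every cluster $Q_j$ contains at least one site, which lies in some $P_i$, so $c \leq \sum_{i=1}^t m_i$. Splitting the sum according to whether $m_i = 1$ or $m_i \geq 2$,
\[
c \;\leq\; \sum_{i:\,m_i = 1} 1 \;+\; \sum_{i:\,m_i \geq 2} m_i \;<\; t \;+\; \sum_{i:\,m_i \geq 2} \frac{k_i}{\alpha} \;\leq\; k + \frac{k}{\alpha} \;=\; k\left(1+\frac{1}{\alpha}\right),
\]
using $t \leq k$ and $\sum_i k_i = k$. Therefore $|E_{\text{long}}| = c - 1 < k(1+1/\alpha)$, as required. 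The main obstacle I anticipate is making the ``at least $m_i$ inter-cluster edges'' statement for the TSP tour fully rigorous, which amounts to the observation that contracting each $Q_j$ to a point turns $\TSP(P_i)$ into a closed walk on $m_i$ vertices, and such a walk has at least $m_i$ edges.
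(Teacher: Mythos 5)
Your proof is correct, but it takes a genuinely different route from the paper's. The paper works at the level of total edge weight: it picks a set $E$ of at most $k-1$ MST edges so that $\left(\bigcup_i C_i\right)\cup E$ is a connected spanning subgraph, deduces $\|\MST(P)\setminus E\| \le \sum_i \|C_i\| \le kL$, and concludes that $\MST(P)\setminus E$ has fewer than $k/\alpha$ edges longer than $\alpha L$; adding back the at most $k-1$ edges of $E$ gives $< k(1+1/\alpha)$. You instead remove the long MST edges first, obtain well-separated clusters $Q_1,\dots,Q_c$ (the MST cut property correctly gives $d(Q_j,Q_{j'})>\alpha L$), and then count, for each part $P_i$, the number $m_i$ of clusters it meets; the cyclic-transition argument gives $m_i \alpha L < \tsp(P_i) \le k_i L$, and summing over $i$ yields the bound. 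The arithmetic checks out: $c\le\sum_i m_i < t + k/\alpha \le k(1+1/\alpha)$, hence $|E_{\mathrm{long}}|=c-1 < k(1+1/\alpha)$, and the ``main obstacle'' you flag (a cyclic sequence hitting $m_i\ge 2$ distinct cluster-labels has at least $m_i$ label transitions) is indeed true and easy to verify (each cluster contributes at least one maximal run, and the number of transitions equals the number of runs). The trade-off: the paper's argument is shorter and purely weight-based, requiring no cluster-counting; yours gives a more geometric decomposition into separated clusters, at the cost of an extra combinatorial step. Both are valid; the paper's is leaner.
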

%---------------------------------------------------------------------
\begin{proof}
Let $C_1,\ldots,C_q$ be the cycles in the given cyclic solution of latency~$L$, 
let $k_i$ denote the number of robots assigned to~$C_i$, and let $P_i\subset P$
be the sites in~$C_i$. Let $E$ be a subset of $q-1\leq k-1$ edges from $\MST(P)$
such that $\left(\bigcup_{i=1}^q C_i\right) \cup E$ is connected. 
Then  % since $\left(\bigcup_{i=1}^q C_i\right) \cup E$ is a spanning graph for $P$
\[
\sum_{i=1}^q \| C_i \|  > \| \MST(P)\| - \| E\| = \| MST(P)\setminus E\|
\]
Since $\| C_i \| \leq k_i L$, we have $\sum_{i=1}^q \| C_i \| \leq k L$.
Hence, $\| \MST(P)\setminus E\| < kL$, which implies that $\MST(P)\setminus E$
contains less than $k/\alpha$ edges of length more than $\alpha L$.
Including the edges in $E$, we thus know that $\MST(P)$ has less than $k(1+1/\alpha)$ 
edges of length more than $\alpha L$.
\end{proof}

%The following theorem assumes that the running time of our $\gamma$-approximation 
%algorithm for $\TSP(P)$ is at least $T(\MST(P))$, the time to compute $\MST(P)$.  
%This is a reasonable assumption given standard TSP algorithms initially compute $\MST(P)$.
%If this assumption does not hold, we need to add $T(\MST(P))$ to the total running time.
%\mdb{This is not multiplied by $\left(O(k/\varepsilon)\right)^k$, right?
%If true, perhaps we should explicitly say that we assume $T(\MST(P))=\left(O(k/\varepsilon)\right)^k\cdot \tau_\gamma(n)$}
%\peyman{Or we can remove this and just add the $+T(\MST(P))$ to the claim.}
%\mdb{Good idea, modified accordingly.}
%-----------------------------------------------------------------------
\begin{theorem}
Suppose we have a $\gamma$-approximation algorithm for TSP in a metric 
space~$(P,d)$, with running time~$\tau_\gamma(n)$, and an algorithm for computing
an MST that runs in time~$T'(n)$. Then there is a 
$(1+\varepsilon)\gamma$-approximation algorithm for finding a minimum-latency
cyclic patrol schedule with $k$ robots that runs in
$T'(n)+ \left(O(k/\varepsilon)\right)^k\cdot \tau_\gamma(n)$ time.
\end{theorem}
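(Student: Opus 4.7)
The plan is to compute the MST $T$ of $P$, restrict attention to a small collection of ``long'' candidate MST edges, and enumerate all ways of cutting at most $k-1$ of these candidates; each resulting partition (with at most $k$ parts, each $T$-connected) is then evaluated by running the $\gamma$-approximate TSP on its parts and optimally allocating the $k$ robots.

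First I would compute $T$ in time $T'(n)$ and sort its edges by length. Applying Lemma~\ref{lem:few_long_edges} with $\alpha$ chosen so that $\alpha\cdot(1+\varepsilon)L^{*}=\varepsilon L^{*}/k$, the MST has fewer than $M = O(k^{2}/\varepsilon)$ edges of length greater than $\varepsilon L^{*}/k$.  Let $\mathcal{L}$ be the set of the $M$ longest MST edges.  For every subset $S\subseteq\mathcal{L}$ with $|S|\leq k-1$, form the partition $\Pi_S$ consisting of the $|S|+1\leq k$ connected components of $T\setminus S$, run the $\gamma$-approximation for TSP on each part to obtain tours $C_1,\ldots,C_{|S|+1}$, and compute the allocation $k_1,\ldots,k_{|S|+1}$ of the $k$ robots that minimizes $\max_i \|C_i\|/k_i$ (a one-dimensional problem solvable in polynomial time, e.g.\ by binary search on the target latency).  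Output the minimum-latency solution found over all $S$.

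For correctness, Lemma~\ref{lem:structured_opt} furnishes a partition $\Pi=(P_1,\ldots,P_t)$ with $t\leq k$, latency at most $(1+\varepsilon)L^{*}$, and $d(P_i,P_j)>\varepsilon L^{*}/k$ for $i\neq j$.  Any $T$-edge whose endpoints lie in distinct parts of $\Pi$ has length strictly greater than $\varepsilon L^{*}/k$ and therefore belongs to $\mathcal{L}$.  Assuming each $P_i$ is $T$-connected, the set $S^{*}$ of between-part $T$-edges of $\Pi$ has size exactly $t-1\leq k-1$ and $\Pi_{S^{*}}=\Pi$, so the algorithm examines $S^{*}$ and reports a cyclic schedule of latency at most $\gamma(1+\varepsilon)L^{*}$.

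The main obstacle is guaranteeing that $\Pi$ (or a slight modification of it) has $T$-connected parts.  If some $P_i$ has two $T$-components $A,B$, then the MST cycle property applied to any direct edge between $A$ and $B$, combined with the pairwise separation of $\Pi$, forces $d(A,B)>\varepsilon L^{*}/k$; so $A$ and $B$ already behave as independent sub-clusters, and I would argue by induction on the number of $T$-disconnected parts that $\Pi$ can be replaced by a partition of at most $k$ $T$-connected parts whose latency is still at most $(1+\varepsilon)L^{*}$, either by splitting such a $P_i$ when the robot allocation accommodates it or by merging $P_i$ with the part lying on the $T$-path between $A$ and $B$.  Once this is in hand, the runtime bound is immediate: the number of candidate subsets is $\sum_{i=0}^{k-1}\binom{M}{i}\leq k\binom{M}{k-1}\leq k\bigl(eM/(k-1)\bigr)^{k-1}=(O(k/\varepsilon))^{k}$, each subset incurs $O(k\cdot\tau_\gamma(n))$ time for the $\gamma$-approximate TSP computations, and computing $T$ costs $T'(n)$, yielding total $T'(n)+(O(k/\varepsilon))^{k}\cdot\tau_\gamma(n)$.
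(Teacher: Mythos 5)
Your proof follows essentially the same route as the paper: compute $\MST(P)$, use Lemma~\ref{lem:structured_opt} to obtain a well-separated near-optimal partition, use Lemma~\ref{lem:few_long_edges} to bound the number of long MST edges by $O(k^2/\varepsilon)$, enumerate the $(O(k/\varepsilon))^k$ subsets of size at most $k$ of these long edges, and for each resulting candidate partition run the $\gamma$-approximate TSP and allocate robots greedily. The counting and the final approximation factor are handled the same way.

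Where you diverge is in being explicit about a subtle step that the paper treats in one line. After establishing that each short-forest component $V(T_j)$ lies entirely inside one part of $\Pi$, the paper immediately asserts ``Thus $\Pi$ is a coarsening of $\{V(T_1),\ldots,V(T_z)\}$ with respect to some subset of $E$,'' which (with the paper's definition of coarsening) is equivalent to saying that each $P_i$ is connected in $\MST(P)$ after removing some subset of the long edges; this is exactly the $T$-connectedness you flag. You are right that establishing $V(T_j)\subseteq P_i$ alone does not yield this, so you correctly identify the point that needs more care. However, your proposed fix is not convincing as written. In the split branch, if $P_i$ is assigned $k_i=1$ robot you cannot split it at all, and even when $k_i\ge 2$ there is no reason $\max(\tsp(A)/k_A,\tsp(B)/k_B)$ stays below $(1+\varepsilon)L^*$. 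In the merge branch, the part ``lying on the $T$-path between $A$ and $B$'' is joined to $P_i$ across an MST edge whose length is only bounded \emph{below} by $\varepsilon L^*/k$; Lemma~\ref{lem:add_edges} would charge the full (possibly huge) length of that edge to the latency, so the $(1+\varepsilon)L^*$ bound is not preserved. So the induction you sketch does not close the gap; you would need a different argument (for instance, showing directly that the near-optimal well-separated partition produced in Lemma~\ref{lem:structured_opt} already has $\MST$-connected parts, or arguing that cutting exactly the set of inter-part MST edges and possibly re-merging extra components along short MST edges only still yields a partition of latency at most $(1+\varepsilon)L^*$). Apart from this one step the proof is correct and matches the paper's.
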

%-----------------------------------------------------------------------
\begin{proof}
Let $L^*$ be the latency in an optimal cyclic solution.
By Lemma~\ref{lem:structured_opt} there is a solution~$S$ with latency $(1+\varepsilon)L^*$ 
and partition $\Pi=\{P_1,\ldots,P_t\}$ such that $d(P_i,P_j)>\varepsilon L^* / k$ for all $i\neq j$.
Let $E$ be the set of edges of $\MST(P)$ with length more than $\varepsilon L^* /k$, 
and let  $T_1,\ldots,T_z$ be the forest obtained from $\MST(P)$ by removing~$E$. 
Let $V(T_j)$ denote the sites in $T_j$. For any $j$ we have $V(T_j)\subseteq P_i$ for some $i$. Otherwise, there would exist two sites $p,q\in V(T_j)$ that are neighbors in $T_j$ but stay in different sets in $\Pi$. 
This would lead to a contradiction: the former implies $d(p,q)\leq \varepsilon L^* /k$ 
while the later implies $d(p,q)> \varepsilon L^* /k$. 
% \jie{$L$ should be $L^*$? Also I am not getting this sentence. What if $p, q$ are connected in $\MST(P)$ by a chain of "short" edges? They are still in the same connected component $T_j$ but they are more than $\varepsilon L^*/k$ away?} 
% \haotian{I think there exists two adjacent sites $p, q \in V(T_j)$, but they are in different sets in $\Pi$.}
Thus $\Pi$ is a coarsening of $\{V(T_1),\ldots,V(T_z)\}$ with respect to some subset of~$E$.

By Lemma~\ref{lem:few_long_edges}, the number of edges of $\MST(P)$ longer than $\varepsilon L^*/k$ 
is at most $k\left(1+\frac{k}{\varepsilon} \right)$. That is, the heaviest 
$k\left(1+\frac{k}{\varepsilon} \right)$ edges of $\MST(P)$ are a superset of
the set $E$ from above.  Thus we can find the partition $\Pi$ from above by first computing $\MST(P)$, 
removing the heaviest $k\left(1+\frac{k}{\varepsilon} \right)$ edges, and then trying all 
coarsenings determined by subsets of the removed edges.
Given a $\gamma$-approximation for TSP, below we argue how to get a $\gamma$-approximation to the 
optimal cyclic solution for a given partition $\Pi$. Running this subroutine for each of the above 
determined partitions and taking the best solution found will thus give latency at most~$(1+\varepsilon)\gamma L^*$.
\medskip

Observe that the optimal cyclic solution on a given partition $\Pi=\{P_1,\ldots,P_t\}$ 
uses cycles determined by $\TSP(P_i)$, and chooses $k_i$, the number of robots assigned to $P_i$,
so as to minimize $\max_i \tsp(P_i)/k_i$. Thus we can compute a $\gamma$-approximation of the optimal solution 
on $\Pi$ by first computing a $\gamma$-approximation to $\TSP(P_i)$ for all $i$, 
where $\overline{\tsp}(P_i)$ denotes its corresponding value, and then selecting $k_1',\ldots,k_t'$ 
so as to minimize $\max_i \overline{tsp}(P_i)/k_i'$. The latter step of determining the $k_i'$ 
can be done in $O(k\log k)$ time by initially assigning one robot to each $P_i$, and then iteratively 
assigning each next robot to whichever set of the partition currently has the largest ratio. 
The latency of the solution we find for $\Pi$ is thus
\[
\max_i \left\{ \frac{\overline{\tsp}(P_i)}{k_i'} \right\} \leq 
\max_i \left\{ \frac{\overline{\tsp}(P_i)}{k_i} \right\} \leq 
\max_i \left\{ \frac{\gamma \cdot \tsp(P_i)}{k_i} \right\} =
\gamma\cdot\mbox{[optimal cyclic latency for $\Pi$]},
\]
where the last inequality follows from the fact that
$\overline{\tsp}(P_i)\leq \gamma \cdot \tsp(P_i)$ for all $i$. 
% this gives a latency at most a $\gamma$ factor larger than the minimum latency on $\Pi$.
%and we know there is a solution with partition $\Pi$ whose maximum latency $(1+\varepsilon)L^*$, this implies this solution has maximum latency $(1+\varepsilon)\gamma L^*$.

It remains to bound the running time. For each partition~$\Pi$ we approximate $\TSP(P_i)$ 
for all~$i$, and then run an $O(k\log k)$ time algorithm to determine the robot assignment. 
Thus the time per partition is bounded by $\tau_\gamma(n)$, where $n$ is the total number of sites. Here we assume that 
$\tau_\gamma(n)=\Omega(n\log n)$ and that $\tau_\gamma(n)$ upper bounds the time 
for the initial $\MST(P)$ computation. 

The number of partitions we consider is determined by the number of subsets of size at most $k$ of the longest 
$k(1+k/\varepsilon)$ edges of $\MST(P)$, which is bounded by
\[
%{{2k(k+1)/\varepsilon}\choose k}\cdot 2^k, 
{{k(1+k/\varepsilon)}\choose k}\cdot 2^k,
\]
as the first term bounds the number of subsets of size exactly $k$, 
and for each subset the second term accounts for the number of ways in which we can pick at most $k$ edges from that subset.
We have the following standard upper bound on binomial coefficients.
\[
{N\choose K} \leq \left(\frac{N\cdot e}{K}\right)^K.
\]
Therefore, the total number of partitions we consider is at most
\[
{{k(1+k/\varepsilon)}\choose k}\cdot 2^k \leq  \left(\frac{k(1+k/\varepsilon)\cdot e}{k}\right)^k\cdot 2^k
= (2e(1+k/\varepsilon))^k
= \left(O(k/\varepsilon)\right)^k.
\]
Thus the total running time is $(O(k/\varepsilon))^k \cdot \tau_\gamma(n)$ as claimed.
\end{proof}
%----------------------------------------------------------------------------------------
Recently, Karlin~\etal~\cite{improvedTSP} presented a $(3/2-\delta)$-approximation
algorithm for metric TSP, where $\delta>10^{-36}$ is a constant,
thus slightly improving the classic $(3/2)$-approximation by Christofides~\cite{christofides1976worst}.
Furthermore TSP in~$\Reals^d$ admits a PTAS~\cite{arora1998polynomial,mitchell1999guillotine}.
Thus we have the following.
%----------------------------------------------------------------------------------------
\begin{corollary}\label{cor:euclidgeneral}
% For any fixed $k$ and $\varepsilon>0$, there is polynomial-time $(3/2+\varepsilon)$-approximation
For any fixed $k$, there is polynomial-time $(3/2)$-approximation
algorithm for finding a minimum-latency cyclic patrol schedule with $k$ robots in arbitrary
metric spaces, and there is a PTAS in~$\Reals^d$ for any fixed constant~$d$.
\end{corollary}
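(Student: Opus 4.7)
The plan is to obtain this corollary as a direct instantiation of the preceding theorem with known TSP approximation algorithms, taking care to absorb the $(1+\varepsilon)$ loss into the final approximation factor in each case.

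For the metric case, I would instantiate the theorem with the recent $(3/2 - \delta)$-approximation of Karlin~\etal~\cite{improvedTSP}, where $\delta > 10^{-36}$ is a fixed positive constant. Setting $\gamma = 3/2 - \delta$, the theorem produces a cyclic patrol schedule of latency at most $(1+\varepsilon)(3/2 - \delta)\,L^*$. Choosing $\varepsilon$ to be any fixed constant with $(1+\varepsilon)(3/2-\delta) \leq 3/2$ (for example $\varepsilon := 2\delta/3$ easily suffices since $\delta$ is tiny), we obtain exactly a $(3/2)$-approximation. Because $k$ is fixed and $\varepsilon$ is now a fixed constant, the combinatorial factor $\bigl(O(k/\varepsilon)\bigr)^k$ is an absolute constant and $\tau_\gamma(n)$ is polynomial in $n$, so the total running time is polynomial in $n$.

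For the Euclidean case in $\Reals^d$ with $d$ a fixed constant, I would invoke the Arora/Mitchell PTAS for Euclidean TSP~\cite{arora1998polynomial,mitchell1999guillotine}: for every $\varepsilon' > 0$ this provides a $(1+\varepsilon')$-approximation in polynomial time. Plugging this into the preceding theorem with some small $\varepsilon$ yields a cyclic patrol schedule of latency at most $(1+\varepsilon)(1+\varepsilon')\,L^*$. Given any target $\varepsilon'' > 0$, I would then set $\varepsilon := \varepsilon''/3$ and $\varepsilon' := \varepsilon''/3$, which makes $(1+\varepsilon)(1+\varepsilon') \leq 1+\varepsilon''$. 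For these fixed values the combinatorial factor $\bigl(O(k/\varepsilon)\bigr)^k$ and the PTAS running time are both polynomial in $n$, giving a PTAS for the cyclic patrol scheduling problem.

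The only even mildly delicate point, and not really an obstacle, is the interaction between the two $(1+\varepsilon)$-type factors in the metric case: one needs Karlin~\etal\ rather than classical Christofides to have the tiny slack $\delta$ needed to absorb the $\varepsilon$ loss while landing exactly at the advertised ratio $3/2$. Using Christofides alone would only give a $(3/2+\varepsilon)$-approximation. In both cases the MST computation required by the theorem is subsumed by $\tau_\gamma(n)$, and the assumption that $k$ and (in the Euclidean case) $d$ are fixed is what keeps the $\bigl(O(k/\varepsilon)\bigr)^k$ overhead constant, so polynomial-time is preserved throughout.
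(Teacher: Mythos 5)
Your proposal is correct and takes essentially the same approach as the paper: instantiate the preceding theorem with the Karlin et al.\ $(3/2-\delta)$-approximation for metric TSP and the Arora/Mitchell PTAS for Euclidean TSP, then choose $\varepsilon$ small enough to absorb the $(1+\varepsilon)$ factor. Your observation about why Christofides alone would only give $(3/2+\varepsilon)$ is precisely the reason the paper cites Karlin et al.\ here.
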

%----------------------------------------------------------------------------------------

Theorem~\ref{thm:opt-cyclic} in Section~\ref{sec:cyclic-approx} and Corollary \ref{cor:euclidgeneral} together 
imply the following.
    \begin{restatable}{theorem}{thmmain}\label{thm:main}
    For any fixed $k$ and $\varepsilon>0$, there is a polynomial-time 
    $(3(1-1/k)+\eps)$-approximation algorithm for the $k$-robot patrol-scheduling
    problem in arbitrary metric spaces, and a polynomial-time 
    $(2(1-1/k)+\eps)$-approximation algorithm in $\Reals^d$ (for fixed $d$).
    \end{restatable}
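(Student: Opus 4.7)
The plan is to obtain Theorem~\ref{thm:main} by directly composing the two main ingredients developed earlier in the paper: Theorem~\ref{thm:opt-cyclic}, which bounds the gap between the best cyclic solution and the true optimum, and the algorithmic theorem of Section~\ref{sec:cyclic} (summarized in Corollary~\ref{cor:euclidgeneral}), which lets us approximate the best cyclic solution. There is essentially no new combinatorial work to do; the proof is just a careful multiplication of approximation factors, plus a reparametrization of $\varepsilon$.

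First I would fix notation: let $L^*$ be the latency of an overall optimal (not necessarily cyclic) patrol schedule, and let $L^*_{\mathrm{cyc}}$ be the latency of an optimal cyclic schedule. Theorem~\ref{thm:opt-cyclic} gives
\[
L^*_{\mathrm{cyc}} \;\le\; 2\bigl(1-1/k\bigr)\, L^*.
\]
Next, I would invoke the theorem preceding Corollary~\ref{cor:euclidgeneral}, which shows that given a $\gamma$-approximation algorithm for TSP running in time $\tau_\gamma(n)$, one can produce a cyclic schedule of latency at most $(1+\varepsilon')\gamma\cdot L^*_{\mathrm{cyc}}$ in time $(O(k/\varepsilon'))^k\cdot \tau_\gamma(n)$, for any $\varepsilon'>0$. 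Composing the two bounds gives a schedule of latency at most
\[
(1+\varepsilon')\,\gamma\cdot 2(1-1/k)\, L^*.
\]

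For arbitrary metric spaces I would plug in $\gamma = 3/2$ using Christofides' algorithm (or the slight improvement of Karlin~\etal), so the bound becomes $(1+\varepsilon')\cdot 3(1-1/k)\, L^*$. Choosing $\varepsilon' := \varepsilon/3$ (so that $3(1-1/k)\varepsilon' \le \varepsilon$) yields the claimed $(3(1-1/k)+\varepsilon)$-approximation. For $\Reals^d$ with $d$ fixed, I would plug in the Euclidean TSP PTAS~\cite{arora1998polynomial,mitchell1999guillotine} with parameter $\varepsilon''$, giving $\gamma = 1+\varepsilon''$, and then pick $\varepsilon'$ and $\varepsilon''$ small enough that $(1+\varepsilon')(1+\varepsilon'')\cdot 2(1-1/k)\le 2(1-1/k)+\varepsilon$ (e.g., both equal to $\varepsilon/(8(1-1/k))$); this gives the $(2(1-1/k)+\varepsilon)$-approximation. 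In both cases, since $k$ and $\varepsilon$ are fixed, the $(O(k/\varepsilon'))^k$ overhead from the cyclic-approximation theorem is a constant, and the remaining factor $\tau_\gamma(n)$ is polynomial, so the overall running time is polynomial.

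There is no real obstacle here; the only thing to be careful about is the bookkeeping of the $\varepsilon$'s when translating between the internal parameter of the cyclic approximation (and of the Euclidean TSP PTAS) and the external $\varepsilon$ appearing in the statement of Theorem~\ref{thm:main}, so that the final constants come out as stated.
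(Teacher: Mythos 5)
Your proof is correct and is exactly the route the paper takes: Theorem~\ref{thm:main} is stated there as a direct consequence of composing Theorem~\ref{thm:opt-cyclic} with Corollary~\ref{cor:euclidgeneral}, and the paper gives no further argument. Your $\varepsilon$-bookkeeping fills in a detail the paper leaves implicit but otherwise matches the intended proof.
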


%   \input{cyclic.tex}

%\input{one_dimension.tex}

%\input{weighted.tex}

%------------------------------------------------------------------------------------
%\input{related}

%------------------------------------------------------------------------------------
\section{Conclusion and Future Work}
%------------------------------------------------------------------------------------
This is the first paper that presents rigorous analysis and approximation algorithms 
for multi-robot patrol scheduling problem in general metric spaces. There
are several challenging open problems. The first and foremost is to prove or disprove
the conjecture that there is always a cyclic solution that is optimal overall.
Proving this conjecture will immediately provide a PTAS for the Euclidean multi-robot 
patrol-scheduling problem. It would also imply that the decision problem is decidable.
Another direction for future research is to extend the results to the weighted setting.
As has been shown for the 1-dimensional problem~\cite{wayf},
the weighted setting is considerably harder.

%------------------------------------------------------------------------------------
% \bibliographystyle{abbrv}
\bibliographystyle{plainurl}% the mandatory bibstyle
\bibliography{ref_patrol} 
%------------------------------------------------------------------------------------

\newpage

\appendix

\inmainfalse

% \input{conj-combined-new}

%\input{1D.tex}
%\input{weighted_frequecy.tex}
%------------------------------------------------------------------------------------

%\input{weighted-metric.tex}
\end{document}
%------------------------------------------------------------------------------------